\newcommand{\algorithmicbreak}{\textbf{break}}
\newcommand{\BREAK}{\STATE \algorithmicbreak}
\newtheoremstyle{propstyle} 
    {2mm}                    
    {1mm}                    
    {\itshape}                   
    {}                           
    {\scshape}                   
    {.}                          
    {.5em}                       
    {}  
\theoremstyle{propstyle}
\newtheorem{proposition}{Proposition}
\theoremstyle{propstyle}
\theoremstyle{propstyle}
\theoremstyle{propstyle}
\newtheorem{remark}{Remark}
\theoremstyle{propstyle}
\renewcommand{\paragraph}{%
  \@startsection{paragraph}{4}%
  {\z@}{2ex \@plus 1ex \@minus .2ex}{-1em}%
  {\normalfont\normalsize\bfseries}%
}
\DeclareMathAlphabet\mathbfcal{OMS}{cmsy}{b}{n}
\newcommand{\bu}{\mathbf{u}}
\newcommand{\ba}{\mathbf{a}}
\newcommand{\bv}{\mathbf{v}}
\newcommand{\bs}{\mathbf{s}}
\newcommand{\bx}{\mathbf{x}}
\newcommand{\by}{\mathbf{y}}
\newcommand{\bt}{\mathbf{t}}
\newcommand{\bw}{\mathbf{w}}
\newcommand{\bS}{\mathbf{S}}
\newcommand{\bP}{\mathbf{P}}
\newcommand{\bA}{\mathbf{A}}
\newcommand{\bE}{\mathbf{E}}
\newcommand{\bL}{\mathbf{L}}
\newcommand{\bI}{\mathbf{I}}
\newcommand{\bD}{\mathbf{D}}
\newcommand{\bH}{\mathbf{H}}
\newcommand{\bU}{\mathbf{U}}
\newcommand{\bK}{\mathbf{K}}
\newcommand{\bQ}{\mathbf{Q}}
\newcommand{\bB}{\mathbf{B}}
\newcommand{\bR}{\mathbf{R}}
\newcommand{\bfzero}{\mathbf{0}}
\newcommand{\bfmu}{\bm{\mu}}
\newcommand{\bfSigma}{\bm{\Sigma}}
\newcommand{\bfLambda}{\bm{\Lambda}}
\DeclareMathOperator*{\var}{Var}
\DeclareMathOperator*{\cov}{Cov}
\DeclareMathOperator*{\diag}{diag}
\newcommand{\order}{\mathcal{O}}
\newcommand{\normal}{\mathcal{N}}
\newcommand{\norm}[1]{\lVert#1\rVert}
\DeclareFontFamily{OT1}{pzc}{}
\DeclareFontShape{OT1}{pzc}{m}{it}{<-> s * [1.2] pzcmi7t}{}
\DeclareMathAlphabet{\mathpzc}{OT1}{pzc}{m}{it}
\newcommand{\sx}{\mathcal{X}}
\newcommand{\sa}{\mathcal{A}}
\newcommand{\condset}{\mathcal{C}}
\newcommand{\domain}{\mathcal{D}}
\newcommand{\locs}{\mathcal{S}}
\newcommand{\im}{{i_1,\ldots,i_m}}
\newcommand{\il}{{i_1,\ldots,i_l}}
\newcommand{\js}{{j_1,\ldots,j_s}}
\newcommand{\jm}{{j_1,\ldots,j_m}}
\newcommand{\jmp}{{j_1,\ldots,j_{m+1}}}
\newcommand{\jmm}{{j_1,\ldots,j_{m-1}}}
\newcommand{\jk}{{j_1,\ldots,j_k}}
\newcommand{\jkp}{{j_1,\ldots,j_{k+1}}}
\newcommand{\jl}{{j_1,\ldots,j_\ell}}
\newcommand{\jlp}{{j_1,\ldots,j_{\ell+1}}}
\newcommand{\jM}{{j_1,\ldots,j_M}}
\newcommand{\evol}{\mathcal{E}}
\newcommand{\levol}{\mathbf{E}}
\newcommand{\grid}{\mathcal{S}}
\newcommand{\obs}{\mathcal{I}}
\DeclareMathOperator*{\rchol}{rchol}
\DeclareMathOperator*{\chol}{chol}
\DeclareMathOperator*{\ichol}{\text{ichol}}
\DeclareMathOperator*{\hv}{\texttt{HV}}
\DeclareMathOperator*{\hvl}{\texttt{HVL}}
\newcommand{\condsetN}{N}
\newcommand{\An}{\text{An}}
\title{Hierarchical sparse Cholesky decomposition with applications to high-dimensional spatio-temporal filtering}
\author{Marcin Jurek\thanks{Department of Statistics and Data Science, University of Texas at Austin} \and Matthias Katzfuss\thanks{Department of Statistics, Texas A\&M University. Corresponding author: \texttt{katzfuss@gmail.com}.}}
\date{}
\begin{document}

\maketitle

\begin{abstract}
Spatial statistics often involves Cholesky decomposition of covariance matrices. To ensure scalability to high dimensions, several recent approximations have assumed a sparse Cholesky factor of the precision matrix. We propose a hierarchical Vecchia approximation, whose conditional-independence assumptions imply sparsity in the Cholesky factors of both the precision and the covariance matrix. This remarkable property is crucial for applications to high-dimensional spatio-temporal filtering. We present a fast and simple algorithm to compute our hierarchical Vecchia approximation, and we provide extensions to non-linear data assimilation with non-Gaussian data based on the Laplace approximation. In several numerical comparisons, including a filtering analysis of satellite data, our methods strongly outperformed alternative approaches.
\end{abstract}

{\small\noindent\textbf{Keywords:} state-space model, spatio-temporal statistics, data assimilation, Vecchia approximation, hierarchical matrix, incomplete Cholesky decomposition}

\section{Introduction}
\label{intro}

Symmetric positive-definite matrices arise in spatial statistics, Gaussian-process inference, and spatio-temporal filtering, with a wealth of application areas, including geoscience \citep[e.g.,][]{Cressie1993,Banerjee2004}, machine learning \citep[e.g.,][]{Rasmussen2006}, data assimilation \citep[e.g.,][]{Nychka2010,Katzfuss2015b}, and the analysis of computer experiments \citep[e.g.,][]{Sacks1989,Kennedy2001}.
Inference in these areas typically relies on Cholesky decomposition of the positive-definite matrices. However, this operation scales cubically in the dimension of the matrix, and it is thus computationally infeasible for many modern problems and applications, which are increasingly high-dimensional.


Countless approaches have been proposed to address these computational challenges. \citet{Heaton2017} provide a recent review from a spatial-statistics perspective, and \citet{Liu2018} review approaches in machine learning. In  high-dimensional filtering, proposed solutions include low-dimensional approximations \citep[e.g.,][]{Verlaan1995,Pham1998,Wikle1999,Katzfuss2010}, spectral methods \citep[e.g.][]{Wikle1999,Sigrist2015}, and hierarchical approaches \citep[e.g.,][]{Johannesson2003,LiAmbikasaran2014,Saibaba2015,Jurek2018}. Operational data assimilation often relies on ensemble Kalman filters \citep[e.g.,][]{Evensen1994,Burgers1998,Anderson2001,Evensen2007,Katzfuss2015b,Katzfuss2017c}, which represent distributions by samples or ensembles. 

Perhaps the most promising approximations for spatial data and Gaussian processes implicitly or explicitly rely on sparse Cholesky factors. The assumption of ordered conditional independence in the popular Vecchia approximation \citep{Vecchia1988} and its extensions \citep[e.g.,][]{Stein2004,Datta2016,Guinness2016a,Katzfuss2017a,Katzfuss2018,Katzfuss2020,Schafer2020} implies sparsity in the Cholesky factor of the precision matrix. \citet{Schafer2017} uses an incomplete Cholesky decomposition to construct a sparse approximate Cholesky factor of the covariance matrix. However, these methods are not generally applicable to spatio-temporal filtering, because the assumed sparsity is not preserved under filtering operations.


Here, we relate the sparsity of the Cholesky factors of the covariance matrix and the precision matrix to specific assumptions regarding ordered conditional independence.
We show that these assumptions are simultaneously satisfied for a particular Gaussian-process approximation that we call hierarchical Vecchia (HV), which is a special case of the general Vecchia approximation \citep{Katzfuss2017a} based on hierarchical domain partitioning \citep[e.g.,][]{Katzfuss2015,Katzfuss2017b}.
We show that the HV approximation can be computed using a simple and fast incomplete Cholesky decomposition.

Due to its remarkable property of implying a sparse Cholesky factor whose inverse has equivalent sparsity structure, HV is well suited for extensions to spatio-temporal filtering; this is in contrast to other Vecchia approximations and other spatial approximations relying on sparsity. We provide a scalable HV-based filter for linear Gaussian spatio-temporal state-space models, which is related to the multi-resolution filter of \citet{Jurek2018}.
Further, by combining HV with a Laplace approximation \citep[cf.][]{Zilber2019}, our method can be used for the analysis of non-Gaussian data. Finally, by combining the methods with the extended Kalman filter \citep[e.g.,][Ch.~5]{Grewal1993}, we obtain fast filters for high-dimensional, non-linear, and non-Gaussian spatio-temporal models.
For a given formulation of HV, the computational cost of all of our algorithms scales linearly in the state dimension, assuming sufficiently sparse temporal evolution.

This paper makes several important contributions. First, it succinctly summarizes and proves conditional-independence conditions that ensure that particular elements of the Cholesky factor and its inverse vanish; while some of these conditions have been known before, we present them in a unified framework and provide precise proofs. Second, we describe a new version of the Vecchia approximation and demonstrate that it satisfies all of the conditional-independence conditions. Third, we show how the HV approximation can be easily computed using an incomplete Cholesky decomposition (IC0). Fourth, we present how IC0 can be used to create an approximate Kalman filter. Finally, we describe how all these developments can be combined in a filtering algorithm applicable to non-Gaussian data and non-linear temporal evolution models.

The remainder of this document is organized as follows. In Section \ref{sec:sparsity}, we specify the relationship between ordered conditional independence and sparse (inverse) Cholesky factors. Then, we build up increasingly complex and general methods, culminating in non-linear and non-Gaussian spatio-temporal filters: in Section \ref{sec:hv}, we introduce the HV approximation for a linear Gaussian spatial field at a single time point; in Section \ref{sec:nongaussian}, we extend this to non-Gaussian data; and in Section \ref{sec:filter}, we consider the general spatio-temporal filtering case, including nonlinear evolution. Section \ref{sec:comparison} contains numerical comparisons to existing approaches. Section \ref{sec:data-application} presents a filtering analysis of satellite data. Section \ref{sec:conclusion} concludes.
Appendices \ref{app:graph}--\ref{app:proofs} contain proofs and further details.
Sections S1-S6 in the Supplementary Material provide additional information, including on a particle filter for parameter inference on unknown parameters in the model.
Code implementing our methods and numerical comparisons is available at \url{https://github.com/katzfuss-group/vecchiaFilter}.

\section{Sparsity of Cholesky factors\label{sec:sparsity}}

We begin by specifying the connections between ordered conditional independence and sparsity of the Cholesky factor of the covariance and precision matrix.
\begin{remark}
Let $\bw$ be a normal random vector with variance-covariance matrix $\bK$.
\begin{enumerate}
    \item Let $\bL = \chol(\bK)$ be the lower-triangular Cholesky factor of the covariance matrix $\bK$. For $i>j$:
    \[ \bL_{i,j}=0 \iff w_i \perp w_j \, | \, \bw_{1:j-1} \]
    \item Let $\bU = \rchol(\bK^{-1}) = \bP \chol( \bP\bK^{-1}\bP)\,\bP$ be the Cholesky factor of the precision matrix under reverse ordering, where $\bP$ is the reverse-ordering permutation matrix. Then $\bU$ is upper-triangular, and for $i>j$:
    \[ \bU_{j,i}=0 \iff w_i \perp w_j \, | \, \bw_{1:j-1},\bw_{j+1:i-1} \]
\end{enumerate}
\label{remark:cond-indep}
\end{remark}
The connection between ordered conditional independence and the Cholesky factor of the precision matrix is well known \citep[e.g.,][]{Rue2010}; Part 2 of our remark states this connection under reverse ordering \citep[e.g.,][Prop.~3.3]{Katzfuss2017a}. In Part 1, we consider the lesser-known relationship between ordered conditional independence and sparsity of the Cholesky factor of the covariance matrix, which was recently discussed in \citet[][Sect.~1.4.2]{Schafer2017}. For completeness, we provide a proof of Remark \ref{remark:cond-indep} in Appendix \ref{app:proofs}.

Remark \ref{remark:cond-indep} is crucial for our later developments and proofs. In Section \ref{sec:hv}, we specify the HV approximation of Gaussian processes that satisfies both types of conditional independence in Remark \ref{remark:cond-indep}; the resulting sparsity of the Cholesky factor and its inverse allows extensions to spatio-temporal filtering in Section \ref{sec:filter}.

\section{HV approximation for large Gaussian spatial data\label{sec:hv}}

Consider a Gaussian process $x(\cdot)$ and a vector $\bx = (x_1, \ldots, x_n)^\top$ representing $x(\cdot)$ evaluated on a grid $\grid = \{\bs_1,\ldots,\bs_n \}$ with $\bs_i \in \domain \subset \mathbb{R}^d$ and $x_i = x(\bs_i)$ for $\bs_i \in \domain$, $i=1,\ldots,n$. We assume the following model:
\begin{align}
\label{eq:gobs} y_{i} \,|\, \bx & \stackrel{ind}{\sim} \normal(x_i,\tau_i^2),\qquad i \in \obs, \\
\label{eq:gevol} \bx  & \sim \normal_n(\bfmu,\bfSigma),
\end{align}
where $\mathcal{N}$ and $\mathcal{N}_n$ denote, respectively, a univariate and multivariate normal distribution, $\obs \subset \{1,\ldots,n\}$ are the indices of grid points at which observations are available, and $\by$ is the data vector consisting of these observations $\{y_{i}: i \in \obs\}$.
Note that we can equivalently express \eqref{eq:gobs} using matrix notation as $\by \,|\, \bx \sim \normal(\bH\bx,\bR)$, where $\bH$ is obtained by selecting only the rows with indices $i \in \obs$ from an identity matrix, and $\bR$ is a diagonal matrix with entries $\{\tau_i^2: i \in \obs\}$. While \eqref{eq:gobs} assumes conditional independence of each observation, we discuss in Section S1 how this assumption can be relaxed.

Our interest is in computing the posterior distribution of $\bx$ given $\by$, which requires inverting or decomposing an $n \times n$ matrix at a cost of $\order(n^3)$ if $|\obs| = \order(n)$. This is computationally infeasible for large $n$.

\subsection{The HV approximation\label{sec:hv-intro}}

We now describe the HV approximation with unique sparsity and computational properties, which enable fast computation for spatial models as in \eqref{eq:gobs}--\eqref{eq:gevol} and also allow extensions to spatio-temporal filtering as explained later.

Assume that the elements of the vector $\bx$ are hierarchically partitioned into sets  $\sx^m$, for $m=0, \dots M$, where for $m>1$ we have $\sx^m = \bigcup_{j_1=1}^{J_1} \dots \bigcup_{j_m=1}^{J_m} \sx_\jm$, and $\sx_\jm$ is a set consisting of $|\sx_\jm|$ elements of $\bx$, such that there is no overlap between any two sets, $\sx_\jm \cap \sx_\il = \emptyset$ for $(\jm) \neq (\il)$. Note that this also means that the sets $\sx^m$ and $\sx^l$ are disjoint for $m \neq l$. Define $\sx^{0:m} = \bigcup_{k=0}^m \sx^k$. 
We assume that $\bx$ is ordered according to $\sx^{0:M}$, in the sense that if $i>j$, then $x_i \in \sx^{m_1}$ and $x_j \in \sx^{m_2}$ with $m_1 \geq m_2$. We also say that if $m>l$, then elements of $\sx^m$ are at a higher level than elements of $\sx^l$.
As a toy example with $n=6$, the vector $\bx = (x_1,\ldots,x_6)$ might be partitioned with $M=1$, $J_1 = 2$, as $\sx^{0:1} = \sx^0 \cup \sx^1$, $\sx^0 = \sx = \{x_1,x_2\}$, and $\sx^1 = \sx_{1,1} \cup \sx_{1,2}$, where $\sx_{1,1}=\{x_3,x_4\}$, and $\sx_{1,2} = \{x_5,x_6\}$. Another toy example is illustrated in Figure \ref{fig:toy}.

The exact distribution of $\bx \sim \normal_n(\bfmu,\bfSigma)$ can be written as
\[
\textstyle p(\bx) = \prod_{m=0}^M \prod_\jm p(\sx_\jm|\sx^{0:m-1}, \sx_{\jmm,1:j_m-1}),
\]
where the conditioning set of $\sx_\jm$ consists of all sets $\sx^{0:m-1}$ at lower levels, plus those at the same level that are previous in lexicographic ordering. The idea of \citet{Vecchia1988} was to remove many of these variables in the conditioning set, which for geostatistical applications often incurs only small approximation error due to the so-called screening effect \citep[e.g.,][]{Stein2002,Stein2011}.

\begin{figure}[!htbp]
    \centering
    \begin{subfigure}{.99\textwidth}
    \centering
    \includegraphics[width=0.9\textwidth]{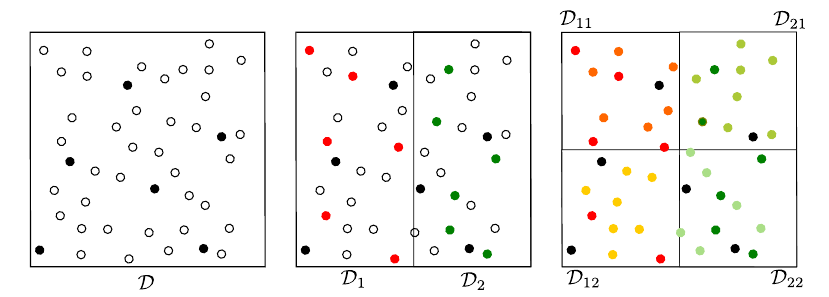}
    \caption{Iterative domain partitioning for $n=35$ locations} 
    \label{fig:domain-&-knots}
\end{subfigure}\\
\vspace{3mm}
    \begin{subfigure}{0.54\textwidth}
    \includegraphics[width=1.0\textwidth]{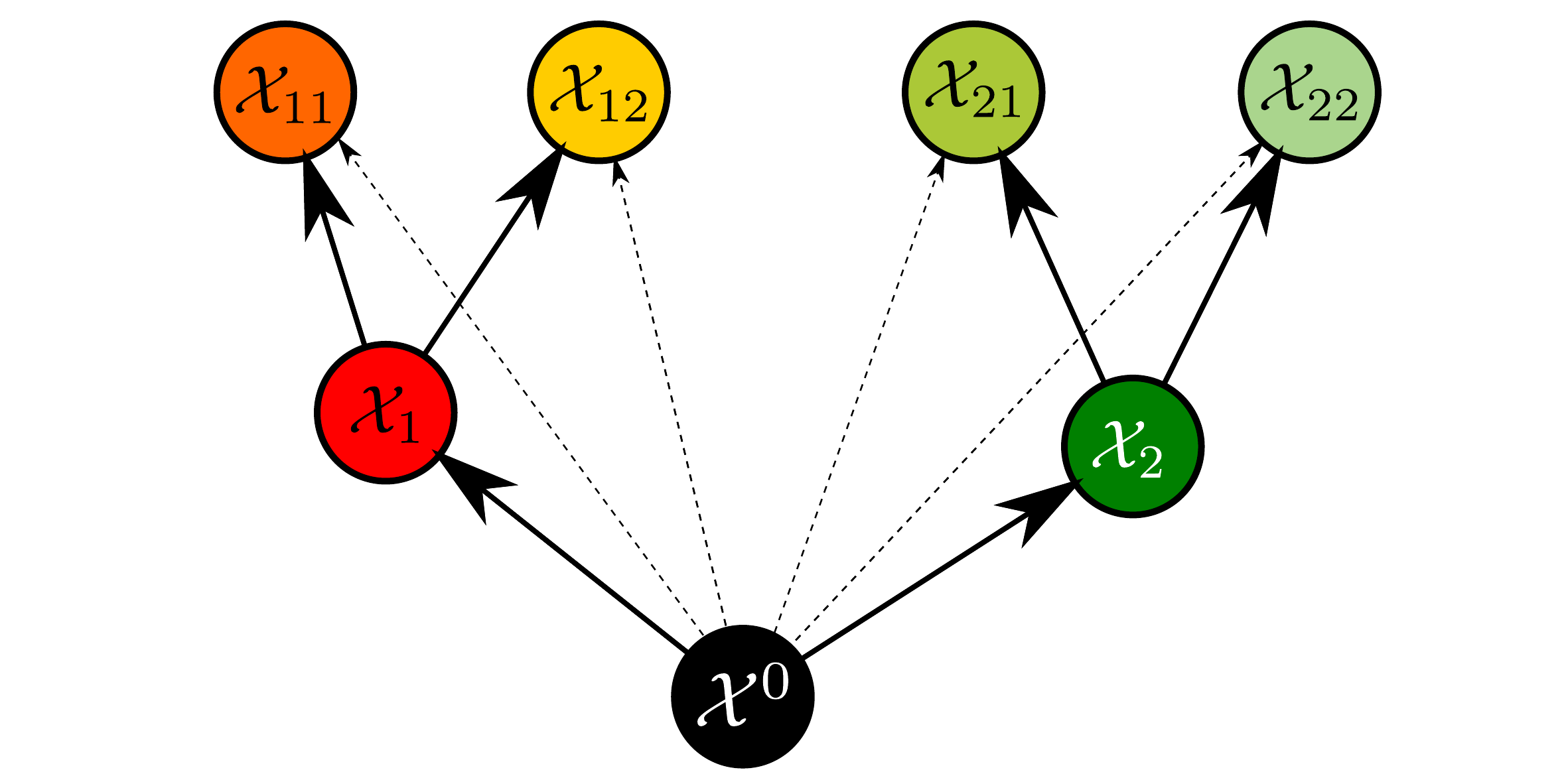}
    \caption{Directed acyclic graph (DAG)}
    \label{fig:dag}
    \end{subfigure}
    \begin{subfigure}{.44\textwidth}
    \includegraphics[trim=0mm 12mm 0mm 12mm, clip,width=0.95\textwidth]{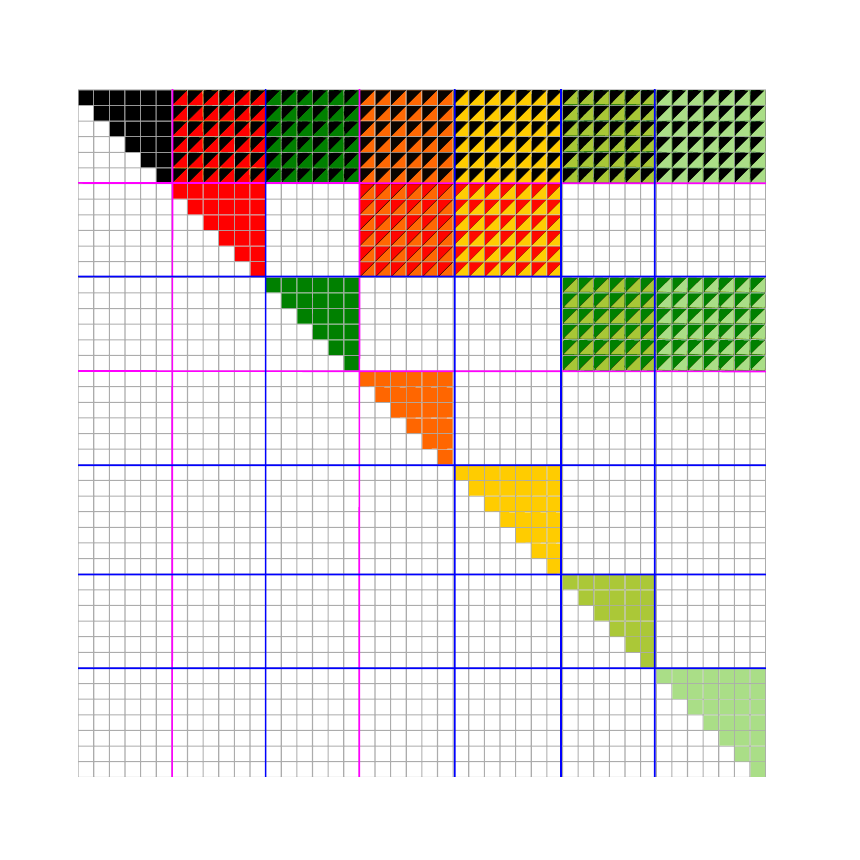}
    \caption{Sparsity pattern of the $\bU$ matrix}
    \label{fig:sparsity-pattern}
    \end{subfigure}
    \caption{Toy example with $n=35$ of the HV approximation in \eqref{eq:vecchia-approx} with $M=2$ and $J_1=J_2=2$; the color for each set $\sx_\jm$ is consistent across (a)--(c).
    (a) Partitioning of the spatial domain $\domain$ and the locations $\locs$; for level $m=0,1,2$, locations of $\sx^{0:m}$ (solid dots) and locations of points at higher levels ($\circ$). 
    (b) DAG illustrating the conditional-dependence structure, with  bigger arrows for connections between vertices at neighboring levels of the hierarchy, to emphasize the tree structure.
    (c) Corresponding sparsity pattern of $\bU$ (see Proposition \ref{prop:same-sparsity-factors}), with groups of columns/rows corresponding to different levels separated by pink lines, and groups of columns/rows corresponding to different $\sx_\jm$ at the same level separated by blue lines. }
    \label{fig:toy}
\end{figure}

Here we consider the HV approximation of the form
\begin{equation}
\textstyle \hat{p}(\bx) = \prod_{m=0}^M \prod_\jm p(\sx_\jm|\sa_\jm),
\label{eq:vecchia-approx}
\end{equation}
where $\sa_\jm = \sx \cup \sx_{j_1} \cup \ldots \cup \sx_{\jmm}$. We call $\sa_\jm$ the set of ancestors of $\sx_\jm$. For example, the set of ancestors of $\sx_{2,1,2}$ is $\sa_{2,1,2} = \sx \cup \sx_2 \cup \sx_{2,1}$. Thus, $\sa_\jm = \sa_\jmm \cup \sx_\jmm$, and the ancestor sets are nested: $\sa_\jmm \subset \sa_\jm$. We can equivalently write \eqref{eq:vecchia-approx} in terms of individual variables as
\begin{equation}
\textstyle \hat{p}(\bx) = \prod_{i=1}^n p(x_i|\,\condset_i),
\label{eq:vecchia-approx-ind}
\end{equation}
where $\condset_i = \sa_\jm \cup \{x_k \in \sx_\jm \! : \, k<i \}$ for $x_i \in \sx_\jm$. The choice of the $\condset_i$ involves a trade-off: generally, the larger the $\condset_i$, the higher the computational cost (see Proposition \ref{prop:complexity-vf} below), but the smaller the approximation error; HV is exact when all $\condset_i = \{x_1,\ldots,x_{i-1}\}$. 

Vecchia approximations and their conditional-independence assumptions are closely connected to directed acyclic graphs \citep[DAGs;][]{Datta2016,Katzfuss2017a}. Summarizing briefly, as illustrated in Figure \ref{fig:dag}, we associate a vertex with each set $\sx_\jm$, and we draw an arrow from the vertex corresponding to $\sx_{\il}$ to the vertex corresponding to $\sx_\jm$ if and only if $\sx_{\il}$ is in the conditioning set of $\sx_\jm$ (i.e., $\sx_{\il} \subset \sa_\jm$). DAGs corresponding to HV approximations always have a tree structure, due to the nested ancestor sets. Necessary terminology and notation from graph theory is reviewed in Appendix \ref{app:graph}.

In practice, as illustrated in Figure \ref{fig:domain-&-knots}, we partition the spatial field $\bx$ into the hierarchical set $\sx^{0:M}$ based on a recursive partitioning of the spatial domain $\domain$ into $J_1$ regions $\domain_{1},\ldots,\domain_{J_1}$, each of which is again split into $J_2$ regions, and so forth, up to level $M$ \citep{Katzfuss2015}: $\domain_\jmm = \bigcup_{j_m=1}^{J_m} \domain_\jm$, $m=1,\ldots,M$.
We then set each $\sx_\jm$ to be a subset of the variables in $\bx$ whose location is in $\domain_\jm$: $\sx_\jm \subset \{x_i: \bs_i \in \domain_\jm\}$. This implies that the ancestors $\sa_\jm$ of each set $\sx_\jm$ consist of the variables associated with regions at lower levels $m=0,\ldots,m-1$ that contain $\domain_\jm$.
Specifically, for all our numerical examples, we set $J_1=\ldots=J_M=2$, and we select each $\sx_\jm$ corresponding to the first $|\sx_\jm|$ locations in a maximum-distance ordering \citep{Guinness2016a,Schafer2017} of $\locs$ that are contained in $\domain_\jm$ but are not already in $\sa_\jm$.

The HV approximation \eqref{eq:vecchia-approx} is closely related to the multi-resolution approximation \citep[MRA;][]{Katzfuss2015,Katzfuss2017b}, as noted in \citet[][Sec.~2.5]{Katzfuss2017a}; specifically, while HV makes conditional-independence assumptions that result in an approximate covariance matrix $\hat\bfSigma$ of $\bx$ with a sparse Cholesky factor (see Proposition \ref{prop:same-sparsity-factors} below), the MRA relies on a basis-function representation of a spatial process that results in a sparse nontriangular matrix square root. However, the approximate covariance matrices $\hat\bfSigma$ implied by both HV and MRA are hierarchical off-diagonal low-rank (HODLR) matrices \citep[e.g.][]{Hackbusch2015,Ambikasaran2016,Saibaba2015,Geoga2018}, as was noted for the MRA in \citet{Jurek2018}. The definition, exposition, and details based on conditional independence and sparse Cholesky factors provided here enable our later proofs, simple incomplete-Cholesky-based computation, and extensions to non-Gaussian data and to nonlinear space-time filtering.

\subsection{Sparsity of the HV approximation}

For all Vecchia approximations, the assumed conditional independence implies a sparse Cholesky factor of the precision matrix \citep[e.g.,][Prop.~3.3]{Datta2016,Katzfuss2017a}. The conditional-independence assumption made in our HV approximation also implies a sparse Cholesky factor of the covariance matrix, which is in contrast to many other formulations of the Vecchia approximation. Let $\mathcal{N}(\bx|\bfmu, \bfSigma)$ denote the density of a normal distribution with mean $\bfmu$ and covariance matrix $\bfSigma$ evaluated at $\bx$.
\begin{proposition}
For the HV approximation in \eqref{eq:vecchia-approx}, we have $\hat p(\bx) = \normal_n(\bx|\bfmu,\hat\bfSigma)$. Define $\bL = \chol(\hat\bfSigma)$ and $\bU = \rchol(\hat\bfSigma^{-1}) = \bP \chol( \bP\hat\bfSigma^{-1}\bP)\,\bP$, where $\bP$ is the reverse-ordering permutation matrix.
\begin{enumerate}
    \item For $i\neq j$:
    \begin{enumerate}
        \item $\bL_{i,j} = 0$ unless $x_j \in \condset_i$
        \item $\bU_{j,i} = 0$ unless $x_j \in \condset_i$
    \end{enumerate}
    \item $\bU = \bL^{-\top}$
\end{enumerate}
\label{prop:same-sparsity-factors}
\end{proposition}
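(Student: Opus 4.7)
The plan is to prove the three assertions in the order 1(b), 1(a), 2. Part 1(b) is a familiar consequence of the DAG interpretation of any Vecchia approximation: combining Claim~\ref{claim:cond-indep}(2) with the local Markov property of the HV DAG, in which the factorization \eqref{eq:vecchia-approx-ind} makes $x_i$ a node with parents $\condset_i \subset \bx_{1:i-1}$ so that the natural index order is topological. Local Markov then gives $x_i \perp (\bx_{1:i-1}\setminus \condset_i) \mid \condset_i$. When $x_j \notin \condset_i$ and $j<i$, the conditioning set in Claim~\ref{claim:cond-indep}(2), namely $\bx_{1:j-1}\cup \bx_{j+1:i-1}$, contains $\condset_i$, so $x_i \perp x_j$ conditional on it and $\bU_{j,i}=0$ follows.

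For Part 1(a), the HV-specific step is to establish the nestedness
\[
\condset_j \subset \condset_i \quad \text{whenever } x_j \in \condset_i.
\]
I would verify this by splitting on whether $x_j$ lies in the same block $\sx_\jm$ as $x_i$ or in one of its ancestor blocks; in both cases the inclusion is immediate from the nested ancestor chain $\sa_\jmm \subset \sa_\jm$ together with the definition $\condset_i = \sa_\jm \cup \{x_k \in \sx_\jm : k < i\}$. Given nestedness, the generative representation induced by the HV factorization yields a clean argument: since $\hat p(x_i\mid \condset_i) = \normal(a_i + \bb_i^\top \bx_{\condset_i},\, s_i^2)$, we may write $x_i = a_i + \bb_i^\top \bx_{\condset_i} + s_i\epsilon_i$ with $\epsilon_i$ i.i.d.\ standard normal. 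An induction on $i$ then shows that $x_i$ is an affine function only of $\{\epsilon_k : x_k \in \condset_i \cup \{x_i\}\}$, because unrolling any $x_j \in \condset_i$ introduces only $\epsilon_k$'s indexed by $\condset_j \cup \{x_j\} \subset \condset_i$. Since $\bx = \bfmu + \bL \bfepsilon$ is the (unique) Cholesky generative representation, reading off coefficients gives $\bL_{i,j}=0$ whenever $x_j \notin \condset_i\cup\{x_i\}$. Alternatively, one could derive $x_i \perp x_j \mid \bx_{1:j-1}$ by d-separation in the HV tree and then invoke Claim~\ref{claim:cond-indep}(1), but the generative route avoids path-by-path analysis.

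Part 2 is then immediate from uniqueness of the Cholesky factorization: $\bL^{-\top}$ is upper triangular with positive diagonal and satisfies $\bL^{-\top}(\bL^{-\top})^\top = (\bL\bL^\top)^{-1} = \hat\bfSigma^{-1}$, so it coincides with $\bU = \rchol(\hat\bfSigma^{-1})$ by applying uniqueness to the reverse-ordered matrix $\bP \hat\bfSigma^{-1}\bP$. The main obstacle throughout is Part 1(a): the analogous statement fails for generic Vecchia approximations, so the proof must genuinely exploit the HV-specific nestedness of the conditioning sets. The short case analysis establishing $\condset_j \subset \condset_i$, and the induction it feeds into, are doing all the essential work.
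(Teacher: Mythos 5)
Your proof is correct, but it reaches Part 1(a) by a genuinely different route than the paper. The paper's proof works entirely in graph-theoretic terms: it moralizes the ancestral subgraph $G^m_{\An(\{1,\dots,j,i\})}$, invokes Corollary 3.23 of Lauritzen, and rules out connecting paths by a four-case analysis that leans on the perfectness of the HV DAG; the same machinery is reused for 1(b). You instead isolate the HV-specific nestedness $\condset_j \subset \condset_i$ for $x_j \in \condset_i$ (which follows from $\sa_\jmm \subset \sa_\jm$ exactly as you say, and which the paper itself only exploits later, in the proof of Proposition \ref{prop:posterior}.1 via the ``sparse general Vecchia'' condition), and feed it into an innovation representation $x_i = a_i + \bb_i^\top \bx_{\condset_i} + s_i\epsilon_i$: the induction shows $\bx - \bfmu = \bT\bfepsilon$ with $\bT$ lower triangular, positive diagonal, and supported on the claimed pattern, and uniqueness of the Cholesky factor of $\var(\bx)=\bT\bT^\top$ forces $\bT = \bL$. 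This is cleaner and more self-contained than the paper's path analysis (it is close in spirit to the argument sketched in Sch\"afer et al., which the paper cites for Claim \ref{claim:cond-indep}.1), and it has the side benefit of establishing the joint normality $\hat p(\bx)=\normal_n(\bx|\bfmu,\hat\bfSigma)$ for free rather than by citation. Your Part 1(b) via the directed local Markov property plus weak union is the standard argument valid for any Vecchia approximation and matches what the paper attributes to prior work, and your Part 2 via Cholesky uniqueness applied to $\bP\hat\bfSigma^{-1}\bP$ is the paper's computation in different clothing. What the paper's heavier graph apparatus buys is reusability: the moralization/separation framework carries over almost verbatim to the posterior sparsity result in Proposition \ref{prop:posterior}.2, where a generative argument would not adapt as directly.
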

The proof relies on Remark \ref{remark:cond-indep}. All proofs can be found in Appendix \ref{app:proofs}. Proposition \ref{prop:same-sparsity-factors} says that the Cholesky factors of the covariance and precision matrix implied by a HV approximation are both sparse, and $\bU$ has the same sparsity pattern as $\bL^{\top}$. An example of this pattern is shown in Figure \ref{fig:sparsity-pattern}. Furthermore, because $\bL = \bU^{-\top}$, we can quickly compute one of these factors given the other, as described in Section \ref{sec:vecchia-complexity} below (see the proof of Proposition \ref{prop:complexity-vf}).

For other Vecchia approximations, the sparsity of the prior Cholesky factor $\bU$ for $\bx$ does not necessarily imply the same sparsity for the Cholesky factor of the posterior precision matrix of $\bx$ given $\by$, and in fact there can be substantial fill-in \citep{Katzfuss2017a}. However, this is not the case for the particular case of HV, for which the posterior sparsity is exactly the same as the prior sparsity:
\begin{proposition}
Assume that $\bx$ has the distribution $\hat p(\bx)$ given by the HV approximation in \eqref{eq:vecchia-approx}. Let $\widetilde\bfSigma = \var(\bx|\by)$ be the posterior covariance matrix of $\bx$ given data $y_{i} \,|\, \bx \stackrel{ind}{\sim} \normal(x_i,\tau_i^2)$, $i \in \obs \subset \{1,\ldots,n\}$, as in \eqref{eq:gobs}. Then:
\begin{enumerate}
    \item $\widetilde\bU = \rchol(\widetilde\bfSigma^{-1})$ has the same sparsity pattern as $\bU = \rchol(\hat\bfSigma^{-1})$. 
    \item $\widetilde\bL = \chol(\widetilde\bfSigma)$ has the same sparsity pattern as $\bL = \chol(\hat\bfSigma)$.
\end{enumerate}
\label{prop:posterior}
\end{proposition}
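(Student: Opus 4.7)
The plan is to reduce each sparsity claim to a conditional-independence (CI) statement by applying Claim \ref{claim:cond-indep} to the posterior $p(\bx \mid \by)$, and then to inherit the required CIs from the prior via a Markov-random-field argument. By Proposition \ref{prop:same-sparsity-factors}, sparsity of both $\bU$ and $\bL$ is characterized by $x_j \notin \condset_i$, so it suffices to show that whenever $i > j$ and $x_j \notin \condset_i$, the posterior satisfies
\[
x_i \perp x_j \mid \bx_{1:j-1}, \by \quad\text{and}\quad x_i \perp x_j \mid \bx_{1:j-1}, \bx_{j+1:i-1}, \by;
\]
applying Claim \ref{claim:cond-indep} parts 1 and 2 to $\bx \mid \by$ then yields the sparsity of $\widetilde{\bL}$ and $\widetilde{\bU}$ respectively.

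The key structural observation is that the posterior density factors as
\[
p(\bx \mid \by) \,\propto\, \prod_{l=1}^n p(x_l \mid \condset_l) \prod_{k \in \obs} p(y_k \mid x_k),
\]
and because each observation factor $p(y_k \mid x_k)$ involves only $x_k$, it can be absorbed into the prior factor $p(x_k \mid \condset_k)$ without expanding the scope beyond $\{x_k\} \cup \condset_k$. Thus the posterior is a Markov random field whose moral (undirected) graph coincides with that of the prior HV DAG, and by the global Markov property (valid since these densities are positive) the prior and posterior share the same CIs among the $x$-variables.

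It then remains to verify graph separation in the HV moral graph. A moral edge between $x_a$ and $x_b$ can arise only when they co-occur in some scope $\{x_l\} \cup \condset_l$, which in the HV construction forces their respective sets to lie on a common root-to-leaf path of the hierarchical tree (i.e., one is an ancestor of the other, or they are equal). Given $x_i \in \sx_\jm$, $x_j \in \sx_{\il}$, $i > j$, and $x_j \notin \condset_i = \sa_\jm \cup \{x_k \in \sx_\jm : k < i\}$, the resolution-respecting ordering together with the definition of $\condset_i$ forces $\sx_{\il}$ to be neither $\sx_\jm$, nor an ancestor of $\sx_\jm$, nor a descendant, so $\sx_{\il}$ and $\sx_\jm$ sit in different branches of the tree. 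Any moral-graph path from $x_i$ to $x_j$ must therefore visit a variable whose set is a common ancestor of $\sx_\jm$ and $\sx_{\il}$; such a common ancestor is a \emph{proper} ancestor of both and so has strictly lower resolution, forcing (by the ordering) its variables to have index $< j$ and to lie in $\bx_{1:j-1}$. Conditioning on $\bx_{1:j-1}$, or a fortiori on $\bx_{1:j-1} \cup \bx_{j+1:i-1}$, therefore blocks every such path and yields both CIs simultaneously.

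The main obstacle will be the combinatorial bookkeeping: verifying that every moral edge is confined to a single root-to-leaf path of the HV tree, and that the resolution-based ordering forces all common-ancestor variables to have index strictly less than $j$. Once these tree-structural facts are in place, the result follows by a direct application of Claim \ref{claim:cond-indep} and the global Markov property for positive Markov random fields.
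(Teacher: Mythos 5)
Your argument is correct, but it takes a genuinely different route from the paper's. For part 1 ($\widetilde\bU$), the paper simply observes that HV satisfies the nested-conditioning-set ("sparse general Vecchia") condition and cites the no-fill-in result of \citet[Prop.~6]{Katzfuss2017a}; for part 2 ($\widetilde\bL$), it augments the DAG with vertices for the observations $y_k$, proves separation of $i$ and $j$ by $\{1,\dots,j-1\}$ in the prior DAG, and then shows that the "adding a child" operation preserves perfection and separation so that the enlarged separator $\{1,\dots,j-1\}\cup\tilde\obs$ works in the augmented moral graph. You instead never introduce the $y$-vertices: since each likelihood factor $p(y_k|x_k)$ is a singleton potential in $x_k$, the posterior factorizes over exactly the same undirected graph on $\bx$ as the prior, so the (undirected) global Markov property transfers every separation-implied conditional independence from prior to posterior, and both sparsity claims then follow from Claim \ref{claim:cond-indep} together with one tree-separation lemma. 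This is a clean unification: it makes part 1 self-contained rather than a citation, and it replaces the child-addition bookkeeping by the standard fact that factorization implies the global Markov property (note that positivity, which you invoke, is not actually needed for that direction--- it is only needed for the Hammersley--Clifford converse). The deferred combinatorics do go through: $x_j\notin\condset_i$ with $j<i$ forces the two sets into incomparable branches of the tree (a descendant relation would violate the resolution-respecting ordering), any path in the comparability graph between incomparable tree nodes must pass through a common ancestor, and such an ancestor sits at strictly lower resolution than $x_j$'s set, so all its variables have index below $j$. Two small imprecisions worth fixing: "the prior and posterior share the same CIs" should read "share all CIs implied by separation in the common graph" (the posterior could have additional independencies), and "same sparsity pattern" is established here, as in the paper, only in the sense that the nonzero entries of $\widetilde\bL$ and $\widetilde\bU$ are contained in the patterns of $\bL$ and $\bU$, which is all that is needed.
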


\subsection{Fast computation using incomplete Cholesky factorization\label{sec:vecchia-complexity}}

For notational and computational convenience, we assume now that each conditioning set $\condset_i$ consists of at most $\condsetN$ elements of $\bx$. For example, this can be achieved by setting $|\sx_\jm|\leq r$ with $r = \condsetN/(M+1)$. Then $\bU$ can be computed using general expressions for the Vecchia approximation in $\order(n\condsetN^3)$ time \citep[e.g.,][]{Katzfuss2017a}. Inference using the related multi-resolution decomposition \citep{Katzfuss2015,Katzfuss2017b,Jurek2018} can be carried out in $\order(n\condsetN^2)$ time, but these algorithms are fairly involved. 

Instead, we show here how HV inference can be carried out in $\order(n\condsetN^2)$ time using standard sparse-matrix algorithms, including the incomplete Cholesky factorization, based on at most $n\condsetN$ entries of $\bfSigma$. Our algorithm, which is based on ideas in \citet{Schafer2017}, is much simpler than multi-resolution decompositions.

\begin{algorithm}[ht]
\caption{Incomplete Cholesky decomposition: $\ichol(\bA,\bS)$}
\KwInput{ positive-definite matrix $\bA \in \mathbb{R}^{n \times n}$, sparsity matrix $\bS \in \{0,1\}^{n \times n}$}
\KwResult{ lower-triangular $n\times n$ matrix $\bL$ }
\begin{algorithmic}[1]
\FOR{$i=1$ to $n$}
\FOR{$j=1$ to $i-1$}
\IF{$\bS_{i,j}=1$}
\STATE $\bL_{i,j} = (\bA_{i,j} - \sum_{k=1}^{j-1}\bL_{i,k}\bL_{j,k})/\bL_{j,j}$ \\
\ENDIF
\ENDFOR 
\STATE $\bL_{i,i} = (\bA_{i,i} - \sum_{k=1}^{i-1}\bL_{k,k})^{1/2}$
\ENDFOR
\end{algorithmic}
\label{alg:ic0}
\end{algorithm}

The incomplete Cholesky factorization \citep[e.g.,][]{Golub2012}, denoted by $\ichol(\bA, \bS)$ and given in Algorithm \ref{alg:ic0}, is identical to the standard Cholesky factorization of the matrix $\bA$, except that we skip all operations that involve elements that are not in the sparsity pattern represented by the zero-one matrix $\bS$. It is important to note that to compute $\bL=\ichol(\bA, \bS)$ for a large dense matrix $\bA$, we do not actually need to form or access the entire $\bA$; instead, to reduce memory usage and computational cost, we simply compute $\bL=\ichol(\bA \circ \bS, \bS)$ based on the sparse matrix $\bA \circ \bS$, where $\circ$ denotes element-wise multiplication. Thus, while we write expressions like $\bL=\ichol(\bA, \bS)$ for notational simplicity below, this should always be read as $\bL=\ichol(\bA \circ \bS, \bS)$.

For our HV approximation in \eqref{eq:vecchia-approx}, we henceforth set $\bS$ to be a sparse lower-triangular matrix with $\bS_{i,j}=1$ if $x_j \in \condset_i$ or if $i=j$, and $0$ otherwise. Thus, the sparsity pattern of $\bS$ is the same as that of $\bL$, and its transpose is that of $\bU$ shown in Figure \ref{fig:sparsity-pattern}. 

\begin{proposition} \label{prop:chol-lemma}
Assuming \eqref{eq:vecchia-approx}, denote $\var(\bx) = \hat\bfSigma$ and $\bL = \chol(\hat\bfSigma)$. Then, $\bL = \ichol(\bfSigma, \bS)$.
\end{proposition}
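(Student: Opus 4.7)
The plan is to combine Proposition \ref{prop:same-sparsity-factors}, which already tells us that $\bL = \chol(\hat\bfSigma)$ has sparsity pattern $\bS$, with a distributional fact: the HV approximation preserves the true joint distribution when restricted to each ancestor set. Once both are established, the standard Cholesky recursion applied to $\hat\bfSigma$ and Algorithm \ref{alg:ic0} applied to $\bfSigma$ with pattern $\bS$ can be shown to compute exactly the same entries.

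The first step is a key lemma: $\hat\bfSigma_{i,j} = \bfSigma_{i,j}$ whenever $\bS_{i,j} = 1$ (and trivially when $i=j$). I would prove by induction on the tree of blocks $\sx_\jm$ that $\hat p(\sa_\jm \cup \sx_\jm) = p(\sa_\jm \cup \sx_\jm)$. The root block has no ancestors, so its marginal under $\hat p$ is unchanged. For the inductive step, the factor $p(\sx_\jm \mid \sa_\jm)$ in \eqref{eq:vecchia-approx} is the \emph{exact} Gaussian-process conditional, so
\[
\hat p(\sa_\jm \cup \sx_\jm) = p(\sx_\jm \mid \sa_\jm)\,\hat p(\sa_\jm) = p(\sx_\jm \mid \sa_\jm)\,p(\sa_\jm) = p(\sa_\jm \cup \sx_\jm),
\]
where the middle equality uses the inductive hypothesis applied to the parent block. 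In particular, for any $x_i \in \sx_\jm$ and $x_j \in \condset_i \subseteq \sa_\jm \cup \sx_\jm$, the bivariate marginal of $(x_i,x_j)$ under $\hat p$ and $p$ coincide, giving the entry-wise identity.

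The second step is to verify algorithmic equivalence by induction on the column index $j$, matching $\bL_{i,j}$ entries one at a time. For $\bS_{i,j}=1$ with $i>j$, the standard Cholesky formula applied to $\hat\bfSigma$ gives $\bL_{i,j} = (\hat\bfSigma_{i,j} - \sum_{k<j} \bL_{i,k}\bL_{j,k})/\bL_{j,j}$, which by the lemma equals $(\bfSigma_{i,j} - \sum_{k<j} \bL_{i,k}\bL_{j,k})/\bL_{j,j}$ — precisely the update in Algorithm \ref{alg:ic0}. For $\bS_{i,j}=0$, Proposition \ref{prop:same-sparsity-factors}(1a) forces $\bL_{i,j}=0$ in $\chol(\hat\bfSigma)$, and Algorithm \ref{alg:ic0} skips and leaves the entry at $0$. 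The diagonal case is analogous (and always uses $\bfSigma_{i,i}=\hat\bfSigma_{i,i}$). Because every $\bL_{i,k},\bL_{j,k}$ appearing in the sums sits at a strictly earlier column, the inductive hypothesis ensures the two algorithms have produced identical values there, closing the induction.

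The main obstacle is the distributional induction in step one: it hinges on the nested structure $\sa_\jmm \subset \sa_\jm$ of HV's conditioning sets, which ensures that each ancestor set is itself generated entirely by previously processed factors of \eqref{eq:vecchia-approx}. This tree-shaped DAG (Figure \ref{fig:dag}) is exactly what yields the recursive identity $\hat p(\sa_\jm) = p(\sa_\jm)$, and it is also what distinguishes HV from more general Vecchia approximations, in which the analogous marginal-preservation property typically fails and the incomplete-Cholesky shortcut breaks down.
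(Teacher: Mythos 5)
Your proposal is correct and follows essentially the same route as the paper: the paper isolates your first step as its Claim~\ref{claim:marginals} (exactness of the bivariate marginals $\hat p(x_i,x_j)=p(x_i,x_j)$ whenever $x_j\in\condset_i$) and then runs the same entrywise induction comparing the $\chol(\hat\bfSigma)$ recursion with Algorithm~\ref{alg:ic0}, invoking Proposition~\ref{prop:same-sparsity-factors}.1(a) for the skipped entries. The only difference is internal to the lemma: you prove $\hat p(\sa_\jm\cup\sx_\jm)=p(\sa_\jm\cup\sx_\jm)$ by induction down the tree using the ancestral-set marginalization property of the DAG, whereas the paper establishes the same fact by an explicit integration and Bayes-rule computation; your version is arguably cleaner but proves the same statement used in the same way.
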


Hence, the Cholesky factor of the covariance matrix $\hat\bfSigma$ implied by the HV approximation can be computed using the incomplete Cholesky algorithm based on the (at most) $nN$ entries of the exact covariance $\bfSigma$ indicated by $\bS$. Using this result, we propose Algorithm \ref{alg:VF} for posterior inference on $\bx$ given $\by$. 

\begin{algorithm}[ht]
\caption{Posterior inference using HV: $\hv(\by,\bS,\bfmu,\bfSigma,\bH,\bR)$}
\KwInput{data $\by$; sparsity $\bS$; $\bfmu, \bfSigma$ s.t.\ $\bx \sim \normal_n(\bfmu, \bfSigma)$; obs.\ matrix $\bH$; noise variances $\bR$}
\KwResult{$\widetilde\bfmu$ and $\widetilde\bL$ such that $\hat{p}(\bx|\by) = \normal_n(\bx|\widetilde\bfmu,\widetilde\bL\widetilde\bL^\top)$ }
\begin{algorithmic}[1]
\STATE $\bL = \ichol(\bfSigma, \bS)$, using Algorithm \ref{alg:ic0}
\STATE $\bU = \bL^{-\top}$
\STATE $\bfLambda = \bU\bU^\top + \bH^\top\bR^{-1}\bH$
\STATE $\widetilde{\bU} = \bP\left(\chol(\bP\bfLambda\bP)\right)\bP$, where $\bP$ is the order-reversing permutation matrix
\STATE $\widetilde{\bL} = \widetilde{\bU}^{-\top}$
\STATE $\widetilde{\bfmu} = \bfmu + \widetilde{\bL}\widetilde{\bL}^\top\bH^{\top}\bR^{-1}\left(\by - \bH \bfmu\right)$
\end{algorithmic}
\label{alg:VF}
\end{algorithm}

By combining the incomplete Cholesky factorization with the results in Propositions \ref{prop:same-sparsity-factors} and \ref{prop:posterior} (saying that all involved Cholesky factors are sparse), we can perform fast posterior inference:
\begin{proposition} \label{prop:complexity-vf}
Algorithm \ref{alg:VF} can be carried out in $\order(n\condsetN^2)$ time and $\order(n\condsetN)$ space, assuming that $|\condset_i| \leq \condsetN$ for all $i=1,\ldots,n$.
\end{proposition}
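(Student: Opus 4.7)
The plan is to walk through the six lines of Algorithm \ref{alg:VF} in sequence, bounding the cost of each using the sparsity guarantees from Propositions \ref{prop:same-sparsity-factors} and \ref{prop:posterior}. Throughout, every triangular factor that appears has at most $N$ non-zeros per row or column, so all stored objects are of size $\order(nN)$, which immediately yields the memory bound.

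For Line 1, since $\bS$ has $\order(N)$ non-zeros per row, only $\order(nN)$ entries of $\bfSigma$ are ever touched when forming $\bfSigma \circ \bS$, and Algorithm \ref{alg:ic0} performs a non-trivial update only when the indices being combined both lie in the sparsity pattern; the standard incomplete-Cholesky accounting then gives $\order(nN^2)$ time. For Line 2, Proposition \ref{prop:same-sparsity-factors} tells me that $\bU = \bL^{-\top}$ has the transposed sparsity pattern of $\bL$, so I would compute $\bU$ column by column via sparse back-substitution of $\bL^\top \bu_i = \be_i$, restricted to the known $\order(N)$ non-zero positions of each $\bu_i$, at cost $\order(N^2)$ per column. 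For Line 3, I would evaluate $(\bU\bU^\top)_{i,j}$ only at positions where $\widetilde\bU$ is non-zero---an $\order(nN)$-sized index set by Proposition \ref{prop:posterior}---each entry being an $\order(N)$ inner product of two sparse rows of $\bU$, and then add the diagonal $\bH^\top\bR^{-1}\bH$ in $\order(n)$ additional work.

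For Line 4, Proposition \ref{prop:posterior} again supplies the sparsity pattern of $\widetilde\bU$, and the exact Cholesky of $\bP\bfLambda\bP$ introduces no fill-in beyond that pattern, so it coincides with an incomplete Cholesky for that pattern and the Line-1 analysis applies. Line 5 mirrors Line 2 with $\widetilde\bL$ in place of $\bL$. Line 6 consists of a fixed number of sparse matrix-vector products with objects having $\order(nN)$ non-zeros, for $\order(nN)$ time total. Summing the six contributions yields the claimed $\order(nN^2)$ time and $\order(nN)$ memory.

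The hard part is justifying Lines 3 and 4: a priori, $\bU\bU^\top$ and its Cholesky factor could have far more than $\order(nN)$ non-zeros, and one would fear quadratic fill-in in the posterior factor. It is precisely Proposition \ref{prop:posterior} that rules this out, allowing me to restrict all computations to the pre-committed pattern $\bS^\top$; without that posterior-sparsity guarantee, the claimed complexity would not follow from these direct sparse-matrix operations.
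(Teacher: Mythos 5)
Your proposal is correct and follows essentially the same route as the paper: a line-by-line accounting of Algorithm \ref{alg:VF} in which the prior sparsity (Proposition \ref{prop:same-sparsity-factors}, with the $\order(N)$-per-row/column bounds the paper isolates in Claim \ref{clm:nnz}) controls Lines 1--3, the posterior sparsity (Proposition \ref{prop:posterior}) controls the no-fill-in Cholesky in Line 4 and the triangular inversion in Line 5, and the $\order(nN)$ non-zero counts give the memory bound. The only cosmetic difference is that the paper identifies $\bU\bU^\top$ as the precision matrix and bounds its non-zeros directly via Claim \ref{clm:nnz}, whereas you restrict the Line-3 evaluation to the pattern of $\widetilde\bU$; both yield the same $\order(nN)$ index set.
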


\subsection{Approximation accuracy}\label{sec:accuracy}

Formal error quantification for the HV approximation is a matter of ongoing research. Numerical simulations suggest that it can be almost as accurate (adjusting for its lower computational complexity) as the state-of-the-art general Vecchia approximation in many settings \citep[][Fig.~S3]{Katzfuss2018}. Generally, the quality of the approximation increases with the conditioning-set size $\condsetN$ and with the strength of the screening effect in the process to be approximated. For an approximately low-rank covariance matrix, only a small number of levels $M$ will be necessary to capture it accurately, while greater local variations might require a greater $M$. In our numerical experiments, we rely on the default strategy implemented in the \texttt{GPvecchia} package \citep{GPvecchia}, which we have found to produce good results. 
A drawback of the HV approximation is that its hierarchical conditional-independence assumptions can result in visual artifacts or edge effects along partition boundaries; some examples and a discussion of this issue are provided in Section S5. For situations in which these edge effects are of major concern, it may be useful to use a larger $N$ (and hence sacrifice some computational speed), to manually place the conditioning points at low levels near partition boundaries \citep[e.g.,][Sect.~2.5]{Katzfuss2015}, or to forego the HV approximation in favor of a different (Vecchia) approximation for purely spatial problems.
However, for spatio-temporal filtering, we are not aware of any other spatial approximation method which will ensure that the Cholesky factor of the filtering precision matrix will preserve its sparsity pattern when filtering across time points. This key advantage of our approach ensures scalability of the HV filter discussed in Section \ref{sec:filter} below. Using the HV approximation in a filtering setting has the added benefit that the boundary artifacts in the filtering mean field tend to fade or even disappear within a few time steps (see Section S5). This is because dependence between neighboring points is captured not only by the spatial covariance function, but also by the temporal evolution operator, which is particularly helpful in the case of diffusion or transport evolution models.

\section{Extensions to non-Gaussian spatial data using the Laplace approximation\label{sec:nongaussian}}

Now consider the model
\begin{align}
\label{eq:ngobs} y_{i} \,|\, \bx & \stackrel{ind}{\sim} g_i(y_{i} | x_{i}),\qquad i \in \obs, \\
\label{eq:ngevol} \bx  & \sim \normal_n(\bfmu,\bfSigma),
\end{align}
where $g_i$ is a distribution from an exponential family, and we slightly abuse notation by using $g_i(y_i|x_i)$ to denote both the distribution and its density. Using the HV approximation in \eqref{eq:vecchia-approx}--\eqref{eq:vecchia-approx-ind} for $\bx$, the implied posterior can be written as:
\begin{equation}
    \label{eq:nongaussianpost}
\hat{p}(\bx|\by) = \frac{p(\by|\bx) \hat{p}(\bx)}{\int p(\by|\bx) \hat{p}(\bx) d\bx} = \frac{(\prod_{i \in \obs} g_i(y_i|x_i)) \hat{p}(\bx)}{\int (\prod_{i \in \obs} g_i(y_i|x_i)) \hat{p}(\bx) d\bx}.
\end{equation}
Unlike in the Gaussian case as in \eqref{eq:gobs}, the integral in the denominator cannot generally be evaluated in closed form, and Markov Chain Monte Carlo methods are often used to numerically approximate the posterior. Instead, \citet{Zilber2019} proposed a much faster method that combines a general Vecchia approximation with the Laplace approximation \citep[e.g.][Sect.~3.4]{Tierney1986,Rasmussen2006}. The Laplace approximation is based on a Gaussian approximation of the posterior, obtained by carrying out a second-order Taylor expansion of the posterior log-density around its mode. Although the mode cannot generally be obtained in closed form, it can be computed straightforwardly using a Newton-Raphson procedure, because $\log \hat{p}(\bx|\by) = \log p(\by|\bx) + \log \hat{p}(\bx) + c$ is a sum of two concave functions and hence also concave (as a function of $\bx$, under appropriate parametrization of the $g_i$).

While each Newton-Raphson update requires the computation and decomposition of the $n \times n$ Hessian matrix, the update can be carried out quickly by making use of the sparsity implied by the Vecchia approximation. To do so, we follow \citet{Zilber2019} in exploiting the fact that the Newton-Raphson update is equivalent to computing the conditional mean of $\bx$ given pseudo-data. Specifically, at the $\ell$-th iteration of the algorithm, given the current state value $\bx^{(\ell)}$, let us define
\begin{equation}
 \textstyle \bu^{(\ell)} = \big[ u^{(\ell)}_i \big]_{i \in \obs}, \qquad \text{where} \quad u^{(\ell)}_i = \frac{\partial}{\partial x} \log g_i(y_i|x)\big\vert_{x=x^{(\ell)}_i},
\label{eq:u}
\end{equation}
and
\begin{equation}
\textstyle \bD^{(\ell)} = \diag\big(\{d_i^{(\ell)}: i \in \obs\}\big), \qquad \text{ where} \quad d_i^{(\ell)} = -\big(\frac{\partial^2}{\partial x^2} \log g_i(y_i|x)\big)^{-1}\big\vert_{x=x^{(\ell)}_i}.
\label{eq:D}
\end{equation}
Then, we compute the next iteration's state value $\bx^{(\ell+1)} = \mathbb{E}(\bx|\bt^{(\ell)})$ as the conditional mean of $\bx$ given pseudo-data
$\bt^{(\ell)} = \bx^{(\ell)} + \bD^{(\ell)} \bu^{(\ell)}$
assuming Gaussian noise,
$t^{(\ell)}_i | \bx \stackrel{ind.}{\sim} \normal(x_i,d_i^{(\ell)})$,
$i \in \obs$.
\citet{Zilber2019} recommend computing the conditional mean $\mathbb{E}(\bx|\bt^{(\ell)})$ based on a general-Vecchia-prediction approach proposed in \citet{Katzfuss2018}. Here, we instead compute the posterior mean using Algorithm \ref{alg:VF} based on the HV method described in Section \ref{sec:hv}, due to its sparsity-preserving properties. In contrast to the approach recommended in \citet{Zilber2019}, our algorithm is guaranteed to converge, because it is equivalent to Newton-Raphson optimization of the log of the posterior density in \eqref{eq:nongaussianpost}, which is concave.
Once the algorithm converges to the posterior mode $\widetilde\bfmu$, we obtain a Gaussian HV-Laplace approximation of the posterior as
\[
\hat{p}_L(\bx|\by) = \normal_n(\bx|\widetilde\bfmu,\widetilde\bL\widetilde\bL^\top),
\]
where $\widetilde\bL$ is the Cholesky factor of the negative Hessian of the log-posterior at $\widetilde\bfmu$.
Our approach is described in Algorithm \ref{alg:VL}. The main computational expense for each iteration of the \texttt{for} loop is carrying out Algorithm \ref{alg:VF}, and so each iteration requires only $\order(n\condsetN^2)$ time.

\begin{algorithm}[ht]
\caption{HV-Laplace inference: $\hvl(\by,\bS,\bfmu,\bfSigma,\{g_i\})$}
\KwInput{ data $\by$; sparsity $\bS$; $\bfmu, \bfSigma$ such that $\bx \sim \normal_n(\bfmu, \bfSigma)$; likelihoods $\{g_i: i \in \obs \}$}
\KwResult{ $\widetilde\bfmu$ and $\widetilde\bL$ such that $\hat{p}_L(\bx|\by) = \normal_n(\bx|\widetilde\bfmu,\widetilde\bL\widetilde\bL^\top)$ }
\begin{algorithmic}[1]
\STATE Initialize $\bx^{(0)} = \bfmu$
\STATE Set $\bH = \bI_{\obs,:}$ as the rows $\obs$ of the $n \times n$ identity matrix $\bI$
\FOR{$\ell = 0,1,2,\ldots$}
    \STATE Calculate $\bu^{(\ell)}$ as in \eqref{eq:u}, $\bD^{(\ell)}$ as in \eqref{eq:D}, and $\bt^{(\ell)} = \bx^{(\ell)} + \bD^{(\ell)} \bu^{(\ell)}$
    \STATE Calculate $[\bx^{(\ell+1)},\widetilde\bL]=\hv(\bt^{(\ell)},\bS,\bfmu,\bfSigma,\bH,\bD^{(\ell)})$ using Algorithm \ref{alg:VF}
    \IF{$\norm{\bx^{(\ell+1)} - \bx^{(\ell)}}/\norm{\bx^{(\ell)}} < \epsilon$}
    \BREAK
    \ENDIF
\ENDFOR
\RETURN $\widetilde{\bfmu}=\bx^{(\ell+1)}$ and $\widetilde\bL$
\end{algorithmic}
\label{alg:VL}
\end{algorithm}

In the Gaussian case, when $g_i(y_i | x_i) = \normal(y_i|a_i x_i,\tau_i^2)$ for some $a_i \in \mathbb{R}$, it can be shown using straightforward calculations that the pseudo-data $t_i = y_i/a_i$ and pseudo-variances $d_i = \tau_i^2$ do not depend on $\bx$, and so Algorithm \ref{alg:VL} converges in a single iteration. If, in addition, $a_i=1$ for all $i=1,\ldots,n$, then \eqref{eq:ngobs} becomes equivalent to \eqref{eq:gobs}, and Algorithm \ref{alg:VL} simplifies to Algorithm \ref{alg:VF}.

For non-Gaussian data, our HVL algorithm shares the limitations of the Laplace approximation. In particular, it approximates the mean of the filtering distribution with its mode, which can result in significant errors for highly non-Gaussian settings. Furthermore, because the variance estimate is based on the curvature of the likelihood at the mode, it is prone to underestimating the true uncertainty. \citet{Nickisch2008} provide further details and a comprehensive comparison of several methods of approximating non-Gaussian likelihoods.
Notwithstanding these known limitations, empirical studies \citep[e.g.,][]{Bonat2016, Zilber2019} have shown that Laplace-type approximations can be effectively applied to environmental data sets and can strongly outperform sampling-based approaches such as Markov chain Monte Carlo.

\section{Fast filters for spatio-temporal models} \label{sec:filter}

\subsection{Linear evolution} \label{sec:linear-filter}
We now turn to a spatio-temporal state-space model (SSM), which adds a temporal evolution model to the spatial model \eqref{eq:ngobs} considered in Section \ref{sec:nongaussian}. For now, assume that the evolution is linear.
Starting with an initial distribution $\bx_0 \sim \normal_{n}(\bfmu_{0|0},\bfSigma_{0|0})$, we consider the following SSM for discrete time $t=1,2,\ldots$:
\begin{align}
y_{ti} \,|\, \bx_t & \stackrel{ind}{\sim} g_{ti}(y_{ti} | x_{ti}), \qquad i \in \obs_t \label{eq:linear-dynamics-obs}\\
\bx_t \,|\, \bx_{t-1} & \sim \normal_n(\levol_t\bx_{t-1},\bQ_t),
\label{eq:linear-dynamics}
\end{align}
where $\by_{t}$ is the data vector consisting of $n_t \leq n$ observations $\{y_{ti}: i \in \obs_t\}$, $\obs_t \subset \{1,\ldots,n\}$ contains the observation indices at time $t$, $g_{ti}$ is a distribution from the exponential family, $\bx_t = (x_1,\ldots,x_n)^\top$ is the latent spatial field of interest at time $t$ observed at a spatial grid $\grid$, and $\levol_t$ is a sparse $n \times n$ evolution matrix. Note that we allow for different observation locations at each time point, which is helpful in many remote-sensing applications where observations are often missing (e.g., due to changing cloud cover).
The methods we propose are suitable for any type of $\bQ_t$ matrices; due to its hierarchical structure, HV is able to successfully capture both long-range dependence (which corresponds to low-rank structures in $\bQ_t$) as well as finer, local variations. 

At time $t$, our goal is to obtain or approximate the filtering distribution $p(\bx_t|\by_{1:t})$ of $\bx_t$ given data $\by_{1:t}$ up to the current time $t$. This task, also referred to as data assimilation or on-line inference, is commonly encountered in many fields of science whenever one is interested in quantifying the uncertainty in the current state or in obtaining forecasts into the future.
If the observation equations $g_{ti}$ are all Gaussian, the filtering distribution can be derived using the Kalman filter \citep{Kalman1960} for small to moderate $n$. At each time $t$, the Kalman filter consist of a forecast step that computes $p(\bx_t|\by_{1:t-1})$, and an update step which then obtains $p(\bx_t|\by_{1:t})$. For linear Gaussian SSMs, both of these distributions are multivariate normal.

Our Kalman-Vecchia-Laplace (KVL) filter extends the Kalman filter to high-dimensional SSMs (i.e., large $n$) with non-Gaussian data, as in \eqref{eq:linear-dynamics-obs}--\eqref{eq:linear-dynamics}. 
Its update step is very similar to the inference problem in Section \ref{sec:nongaussian}, and hence it essentially consists of the HVL in Algorithm \ref{alg:VL}. 
We complement this update with a forecast step, in which the moment estimates are propagated forward using the temporal evolution model. This forecast step is exact, and so the KVL approximation error is solely due to the HVL approximation at each update step. 

\begin{algorithm}[ht]
\caption{Kalman-Vecchia-Laplace (KVL) filter}
\KwInput{$\bS$, $\bfmu_{0|0}$, $\bfSigma_{0|0}$, $\{(\by_t,\levol_t,\bQ_t,\{g_{t,i}\}): t=1,2,\ldots \}$}
\KwResult{$\bfmu_{t|t}, \bL_{t|t}$, such that $\hat{p}(\bx_t|\by_{1:t}) = \normal_n(\bx_t|\bfmu_{t|t},\bL_{t|t}\bL_{t|t}^\top)$}
\begin{algorithmic}[1]
\STATE Compute $\bU_{0|0} = \ichol(\bfSigma_{0|0}, \bS)$ and $\bL_{0|0} = \bU_{0|0}^{-\top}$
\FOR{$t = 1, 2, \dots$}
    \STATE Forecast: $\bfmu_{t|t-1} = \levol_t\bfmu_{t-1|t-1}$ and $\bL_{t|t-1} = \levol_t\bL_{t-1|t-1}$ \label{ln:evolL}
    \STATE For all $(i,j)$ with $\bS_{i,j}=1$: $\bfSigma_{t|t-1; i,j} = \bL_{t|t-1;i,:} \bL_{t|t-1;j,:}^\top + \bQ_{t;i,j}$ \label{ln:forecov}
    \STATE Update: $[\bfmu_t, \bL_{t|t}] = \hvl(\by_t, \bS, \bfmu_{t|t-1}, \bfSigma_{t|t-1}, \left\{g_{t,i}\right\}$) using Algorithm \ref{alg:VL}
    \RETURN $\bfmu_{t|t}, \bL_{t|t}$
\ENDFOR
\end{algorithmic}
\label{alg:KVLfilter}
\end{algorithm}

The KVL filter is given in Algorithm \ref{alg:KVLfilter}.
In Line \ref{ln:forecov}, $\bL_{t|t-1;i,:}$ denotes the $i$th row of $\bL_{t|t-1}$. The KVL filter scales well with the state dimension $n$. The evolution matrix $\levol_t$, which is often derived using a forward-finite-difference scheme and thus has only a few nonzero elements in each row, can be quickly multiplied with $\bL_{t-1|t-1}$ in Line \ref{ln:evolL}, as the latter is sparse (see Section \ref{sec:vecchia-complexity}). The $\order(n\condsetN)$ necessary entries of $\bfSigma_{t|t-1}$ in Line \ref{ln:forecov} can also be calculated quickly due to the sparsity of $\bL_{t|t-1;i,:}$. The low computational cost of the \texttt{HVL} algorithm has already been discussed in Section \ref{sec:nongaussian}. Thus, assuming sufficiently sparse $\levol_t$, the KVL filter scales approximately as $\order(n\condsetN^2)$ per iteration.
In the case of Gaussian data (i.e., all $g_{ti}$ in \eqref{eq:linear-dynamics-obs} are Gaussian), our KVL filter will produce similar filtering distributions as the more complicated multi-resolution filter of \citet{Jurek2018}.

\subsection{An extended filter for nonlinear evolution\label{sec:non-linear-filter}}

Finally, we consider a nonlinear and non-Gaussian model, which extends \eqref{eq:linear-dynamics-obs}--\eqref{eq:linear-dynamics} by allowing nonlinear evolution operators, $\evol_t: \mathbb{R}^n \rightarrow \mathbb{R}^n$. This results in the model
\begin{align}
\label{eq:obs} y_{ti} \,|\, \bx_t & \stackrel{ind}{\sim} g_{ti}(y_{ti} | x_{ti}), \qquad i \in \obs_t \\
\label{eq:evol} \bx_t \,|\, \bx_{t-1} & \sim \normal_n(\evol_t(\bx_{t-1}),\bQ_t).
\end{align}

Due to the nonlinearity of the evolution operator $\evol_t$, the KVL filter in Algorithm \ref{alg:KVLfilter} is not directly applicable anymore. However, similar inference is still possible as long as the evolution is not too far from linear. Approximating the evolution as linear is generally reasonable if the time steps are short, or if the measurements are highly informative. In this case, we propose the extended Kalman-Vecchia-Laplace filter (EKVL) in Algorithm~\ref{alg:eKVLfilter}, which approximates the extended Kalman filter \citep[e.g.,][Ch.~5]{Grewal1993} and extends it to non-Gaussian data using the Vecchia-Laplace aproach. For the forecast step, EKVL computes the forecast mean as $\bfmu_{t|t-1} = \evol_t(\bfmu_{t-1|t-1})$. The forecast covariance matrix $\bfSigma_{t|t-1}$ is obtained as before, after approximating the evolution using the Jacobian as $\levol_t = \frac{\partial \evol_t(\by_{t-1})}{\partial \by_{t-1}} \big|_{\by_{t-1} = \bfmu_{t-1|t-1} }$.  Errors in the forecast covariance matrix due to this linear approximation can be captured in the model-error covariance, $\bQ_t$. If the Jacobian matrix cannot be computed, it is sometimes possible to build a statistical emulator \citep[e.g.,][]{Kaufman2011} instead, which approximates the true evolution operator.

Once $\bfmu_{t|t-1}$ and $\bfSigma_{t|t-1}$ have been obtained, the update step of the EKVL proceeds exactly as in the KVL filter by approximating the forecast distribution as Gaussian.

\begin{algorithm}[ht]
\caption{Extended Kalman-Vecchia-Laplace (EKVL) filter}
\KwInput{$\bS$, $\bfmu_{0|0}$, $\bfSigma_{0|0}$, $\{(\by_t,\evol_t,\bQ_t,\{g_{t,i}\}): t=1,2,\ldots \}$}
\KwResult{$\bfmu_{t|t}, \bL_{t|t}$, such that $\hat{p}(\bx_t|\by_{1:t}) = \normal_n(\bx_t|\bfmu_{t|t},\bL_{t|t}\bL_{t|t}^\top)$}
\begin{algorithmic}[1]
\STATE Compute $\bU_{0|0} = \ichol(\bfSigma_{0|0}, \bS)$ and $\bL_{0|0} = \bU_{0|0}^{-\top}$
\FOR{$t = 1, 2, \dots$}
    \STATE Calculate $\levol_t = \frac{\partial \evol_t(\bx_{t-1})}{\partial \bx_{t-1}} \big|_{\bx_{t-1} = \bfmu_{t-1|t-1} }$ \label{ln:jacobian}
    \STATE Forecast: $\bfmu_{t|t-1} = \evol_t(\bfmu_{t-1|t-1})$ and $\bL_{t|t-1} = \levol_t\bL_{t-1|t-1}$
     \STATE For all $(i,j)$ with $\bS_{i,j}=1$: $\bfSigma_{t|t-1; i,j} = \bL_{t|t-1;i,:} \bL_{t|t-1;j,:}^\top + \bQ_{t;i,j}$
    \STATE Update: $[\bfmu_t, \bL_{t|t}] = \hvl(\by_t, \bS, \bfmu_{t|t-1}, \bfSigma_{t|t-1}, \left\{g_{t,i}\right\}$) using Algorithm \ref{alg:VL}
    \RETURN $\bfmu_{t|t}, \bL_{t|t}$
\ENDFOR
\end{algorithmic}
\label{alg:eKVLfilter}
\end{algorithm}

Similarly to Algorithm \ref{alg:KVLfilter}, EKVL scales very well with the dimension of $\bx$, the only difference being the additional operation of calculating the Jacobian in Line \ref{ln:jacobian}, whose cost is problem dependent. Only those entries of $\bE_t$ need to be calculated that are multiplied with non-zero entries of $\bL_{t-1|t-1}$, whose sparsity structure is known ahead of time.

\section{Numerical comparison\label{sec:comparison}}

\subsection{Methods and criteria\label{sec:simul-general}}

We considered and compared the following methods:
\begin{description}
\item[\textbf{Hierarchical Vecchia (HV)}:] Our methods as described in this paper.
\item[\textbf{Low rank (LR)}:] A special case of HV with $M=1$, in which the diagonal and the first $N$ columns of $\bS$ are nonzero, and all other entries are zero. This results in a matrix approximation $\hat\bfSigma$ that is of rank $\condsetN$ plus diagonal, known as the modified predictive process \citep[][]{Banerjee2008,Finley2009} in spatial statistics. LR has the same computational complexity as HV.
\item[\textbf{Dense Laplace (DL)}:] A further special case of HV with $M=0$, in which $\bS$ is a fully dense matrix of ones. Thus, there is no error due to the Vecchia approximation, and so in the non-Gaussian spatial-only setting, this is equivalent to a Gaussian Laplace approximation. DL will generally be more accurate than HV and low-rank, but it scales as $\order(n^3)$ and is thus not feasible for high dimension $n$.
\end{description}

For each scenario below, we simulated observations using \eqref{eq:obs}, taking $g_{t,i}$ to be each of four exponential-family distributions: Gaussian, $\normal(x,\tau^2)$; logistic Bernoulli, $\mathcal{B}(1/(1+e^{-x}))$; Poisson, $\mathcal{P}(e^x)$; and gamma, $\mathcal{G}(a, ae^{-x})$, with shape parameter $a=2$. For most scenarios, we assumed a moderate state dimension $n$, so that DL remained feasible; a large $n$ was considered in Section \ref{sec:big-simul}. In the case of non-Gaussian observations, we used $\epsilon = 10^{-5}$ in Algorithm \ref{alg:VL}.

The main metric to compare HV and LR was the difference in KL divergence between their posterior or filtering distributions and those generated by DL; as the exact distributions were not known here, we approximated the divergence by the difference in joint log scores \citep[dLS; e.g.,][]{Gneiting2014} calculated at each time point for the joint posterior or filtering distribution of the entire field and averaged over several simulations. We also calculated the relative root mean square prediction error (RRMSPE), defined as the root mean square prediction error of HV and LR, respectively, divided by the root mean square prediction error of DL; the error is calculated with respect to the true simulated latent field. For both criteria, lower values are better. In each simulation scenario, the scores were averaged over a sufficiently large number of repetitions to achieve stable averages.

For comparisons over a range of conditioning-set sizes $\condsetN$, we supplied a set of desired sizes to the \texttt{GPvecchia} package, upon which our method implementation is based. For a desired size, \texttt{GPvecchia} automatically determines a suitable partitioning scheme (and hence a suitable $N$) for the grid $\locs$, such that $\condsetN$ is equal to or slightly below the desired value. We generated the HV approximation using this approach and then used the resulting $\condsetN$ value for both LR and DL, to ensure a fair comparison.

Large-scale spatio-temporal filtering is also often achieved using the ensemble Kalman filter (EnKF) and its extensions. While EnKF methods can be very effective, even in the presence of nonlinear evolution, they often struggle with long-range correlations in the forecast covariance, because EnKF methods typically require localization (e.g., tapering the empirical covariance matrix). As the EnKF is a substantially different approach that does not neatly fit into our framework here, we conducted a separate comparison between HV and EnKF in Section S3 in the Supplementary Materials; the results strongly favored the HV filter.

\subsection{Spatial-only data}
\label{sec:purely-spatial}

In our first scenario, we considered spatial-only data according to \eqref{eq:ngobs}--\eqref{eq:ngevol} on a grid $\grid$ of size $n=34 \times 34 = 1{,}156$ on the unit square, $\domain = [0,1]^2$. We set $\bfmu = \bfzero$ and $\bfSigma_{i,j} = \exp(-\|\bs_i - \bs_j\|/0.15)$. For the Gaussian likelihood, we assumed variance $\tau^2 = 0.2$. 


The comparison scores averaged over 100 simulations for the posteriors obtained using Algorithm \ref{alg:VL} are shown as a function of $\condsetN$ in Figure \ref{fig:spatial-scores}. HV (Algorithm \ref{alg:VF}) was much more accurate than LR for each value of $\condsetN$.

\begin{figure}[ht]
    \centering
    \begin{subfigure}{1.0\textwidth}
        \includegraphics[trim=0mm 6mm 0mm 0mm, clip,width=1.0\textwidth]{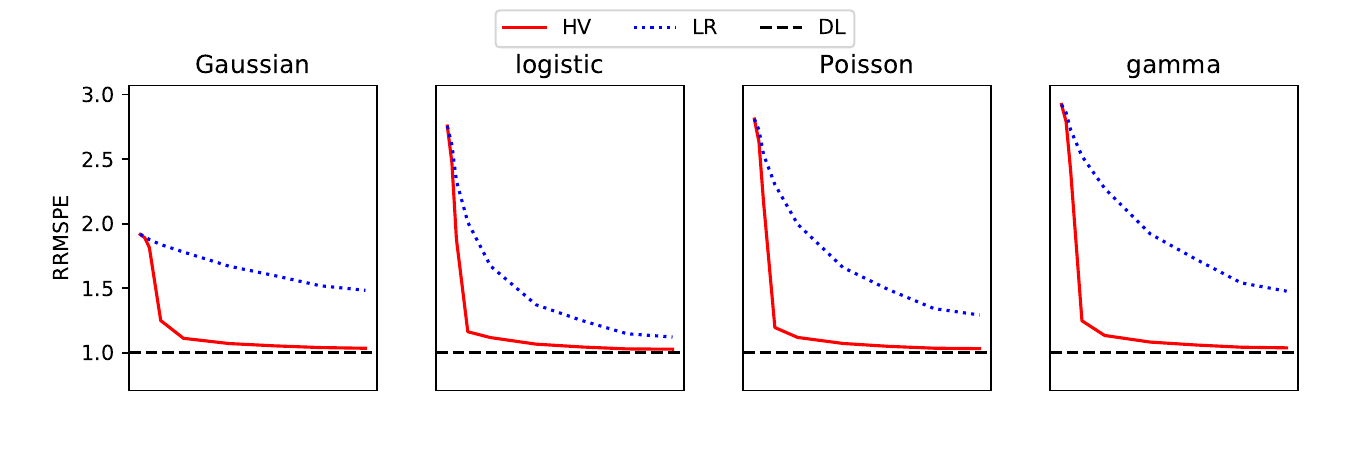}
    \end{subfigure}
    \medskip
    \begin{subfigure}{1.0\textwidth}
        \includegraphics[trim=0mm 0mm 0mm 6mm, clip,width=1.0\textwidth]{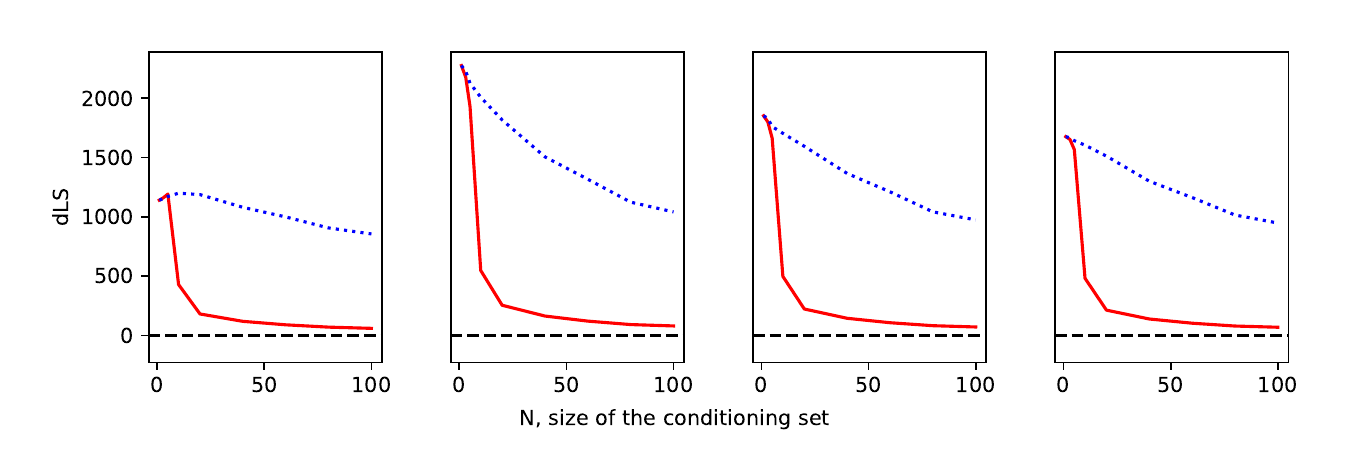}
    \end{subfigure}
    \caption{Approximation accuracy for the posterior distribution $\bx|\by$ for spatial data (see Section \ref{sec:purely-spatial})}
    \label{fig:spatial-scores}
\end{figure}

\subsection{Linear temporal evolution\label{sec:simul-linear}}

Next, similar to \citet{Jurek2018}, we considered a linear spatio-temporal advection-diffusion process $x(\bs, t)$ whose evolution is governed by the partial differential equation,
\[
\frac{\partial x}{\partial t} = \alpha \left( \frac{ \partial^2 x }{ \partial^2 s_x } + \frac{ \partial^2 x }{ \partial^2 s_y } \right) + \beta \left( \frac{ \partial x}{\partial s_x} + \frac{\partial x}{\partial s_y} \right) + \nu(\bs, t),
\]
where $\bs = (s_x, s_y) \in  \domain=[0,1]^2$ and $t \in [0, T]$. We set the diffusion parameter $\alpha = 4 \times 10^{-5}$ and advection parameter $\beta = 10^{-2}$. The error $\nu(\bs, t)$ was a zero-mean stationary Gaussian process with exponential covariance function with range $\lambda = 0.15$, independent across time. 

The spatial domain $\domain$ was discretized on a grid of size $n=34 \times 34 = 1{,}156$ using centered finite differences, and we considered discrete time points $t=1,\ldots,T$ with $T=20$. After this discretization, our model was of the form \eqref{eq:linear-dynamics-obs}--\eqref{eq:linear-dynamics}, where $\bfSigma_{0|0}=\bQ_1=\ldots=\bQ_T$ with $(i,j)$th entry $\exp(-\|\bs_i - \bs_j\|/\lambda)$, and $\bE_t$ was a sparse matrix with nonzero entries corresponding to interactions between neighboring grid points to the right, left, top and bottom. See the supplementary material of \citet{Jurek2018} for details.

At each time $t$, we generated $n_t = n/10$ observations with indices $\obs_t$ sampled randomly from $\{1,\ldots,n\}$. For the Gaussian case, we assumed variance $\tau^2=0.25$.
We used conditioning sets of size at most $\condsetN=41$ for both HV and LR; specifically, for HV, we used $J=2$ partitions at $M=7$ levels, with set sizes $|\sx_\jm|$ of $5, 5, 5, 5, 6, 6, 6$, respectively, for $m=0,1,\ldots,M-1$, and $|\sx_\jM| \leq 3$.

Figure \ref{fig:linear-scores} compares the scores for the filtering distributions $\bx_t | \by_{1:t}$ obtained using Algorithm \ref{alg:KVLfilter}, averaged over 80 simulations. Again, HV was much more accurate than LR. Importantly, while the accuracy of HV was relatively stable over time, LR became less accurate over time, with the approximation error accumulating.

\begin{figure}[ht]
    \centering
    \begin{subfigure}{1.0\textwidth}
        \includegraphics[trim=0mm 6mm 0mm 0mm, clip,width=1.0\textwidth]{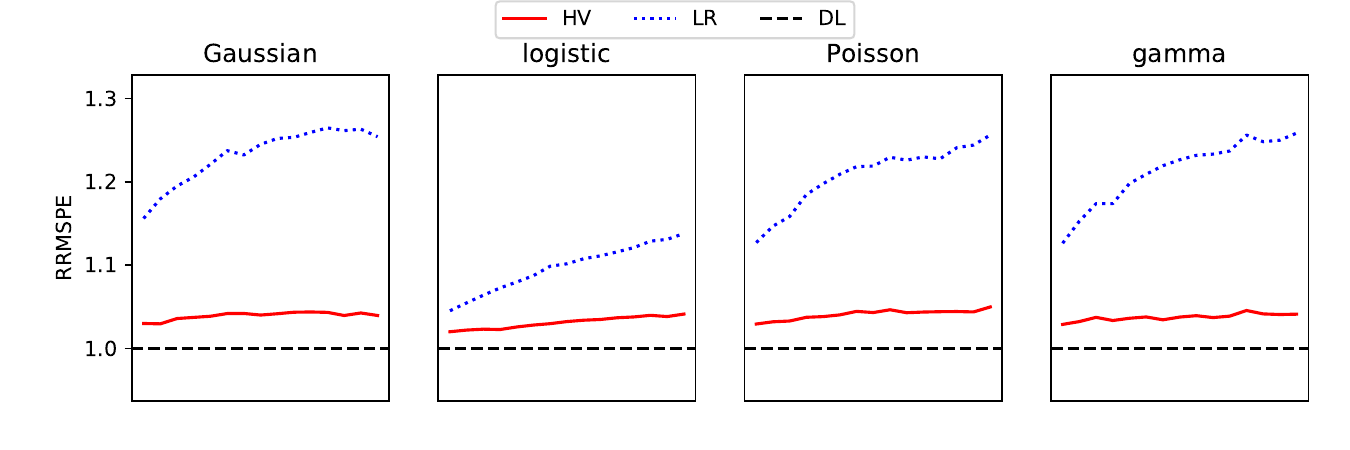}
    \end{subfigure}
    \begin{subfigure}{1.0\textwidth}
        \includegraphics[trim=0mm 0mm 0mm 6mm, clip,width=1.0\textwidth]{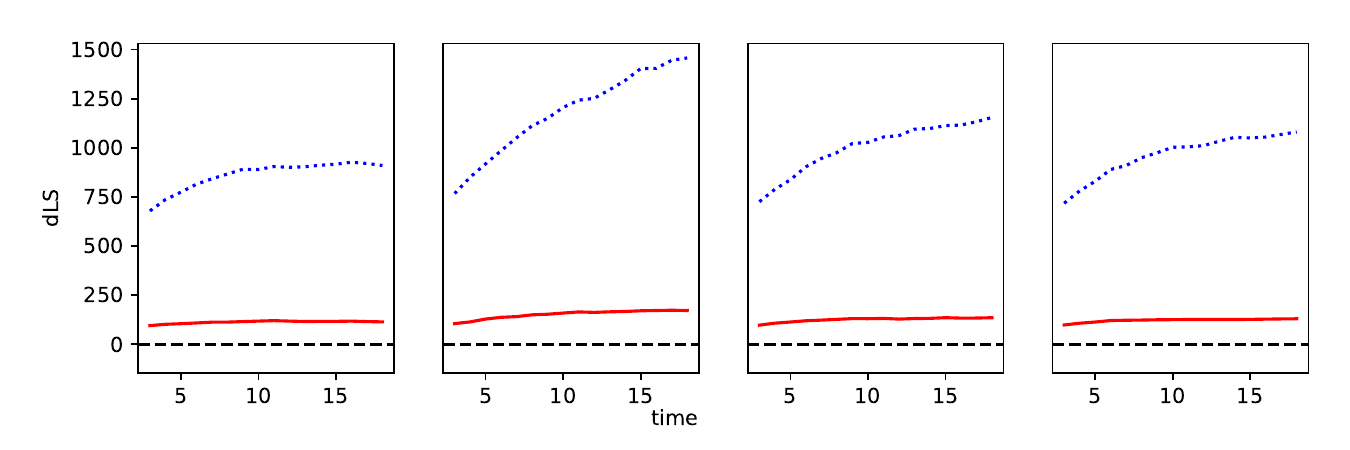}
    \end{subfigure}
    \caption{Accuracy of filtering distributions $\bx_t | \by_{1:t}$ for the advection-diffusion model in Section \ref{sec:simul-linear}}
    \label{fig:linear-scores}
\end{figure}

\subsection{Simulations using a very large \texorpdfstring{$n$}{n}} \label{sec:big-simul}

We repeated the advection-diffusion experiment from Section \ref{sec:simul-linear} on a high-resolution grid of size $n=300 \times 300 = 90{,}000$, with $n_t=9{,}000$ observations corresponding to 10\% of the grid points. In order to avoid numerical artifacts related to the finite differencing scheme, we reduced the advection and diffusion coefficients to $\alpha = 10^{-7}$ and $\beta=10^{-3}$, respectively. We set $\condsetN=44$, $M=14$, $J=2$, and $|\sx_\jm| = 3$ for $m=0,1,\ldots,M-1$, and $|\sx_\jM| \leq 2$. DL was too computationally expensive due to the high dimension $n$, and so we simply compared HV and LR based on the root mean square prediction error (RMSPE) between the true state and their respective filtering means, averaged over 10 simulations.

As shown in Figure \ref{fig:linear-scores-big}, HV was again much more accurate than LR. Comparing to Figure \ref{fig:linear-scores}, we see that the relative improvement of HV to LR increased even further; taking the Gaussian case as an example, the ratio of the RMSPE for HV and LR was around 1.2 in the small-$n$ setting, and greater than 2 in the large-$n$ setting.

\begin{figure}[ht]
    \centering
    \begin{subfigure}{1.0\textwidth}
        \includegraphics[width=1.0\textwidth]{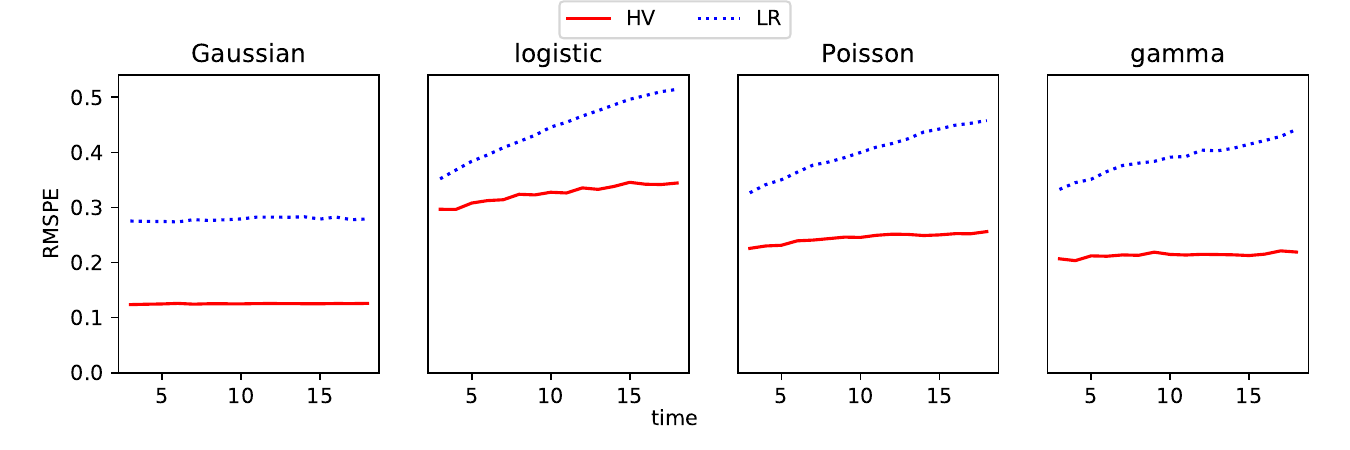}
    \end{subfigure}
    \caption{Root mean square prediction error (RMSPE) for the filtering mean in the high-dimensional advection-diffusion model with $n=90{,}000$ in Section \ref{sec:big-simul}}
    \label{fig:linear-scores-big}
\end{figure}

\subsection{Nonlinear evolution with non-Gaussian data\label{sec:lorenz}}


Our final set of our simulations involves the most general model \eqref{eq:obs}-\eqref{eq:evol} with nonlinear evolution $\evol_t$. Specifically, we considered a complicated model \citep[][Sect.~3]{Lorenz2005} that realistically replicates many features of atmospheric variables along a latitudinal band, and its different variants are commonly used as a test-bed for other data assimilation techniques \citep[e.g.,][]{Ott2004}. 
The model dynamics are described by
\begin{equation}
\textstyle    \frac{\partial }{\partial t} \tilde x_i = \frac{1}{K^2} \sum_{l=-K/2}^{K/2}\sum_{j=-K/2}^{K/2} -\tilde x_{i-2K-l}\tilde x_{i-K-j} + \tilde x_{i-K+j-l}\tilde x_{i+K+j} - \tilde x_i + F,
    \label{eq:Lorenz04-1}
\end{equation}
where $\tilde x_{-i} = \tilde x_{n-i}$, and we used $K=32$ and $F=10$.
By solving \eqref{eq:Lorenz04-1} on a regular grid of size $n=960$ on a circle with unit circumference using a 4-th order Runge-Kutta scheme, with five internal steps of size $dt=0.005$, and setting $x_i = b\tilde x_i$ with $b=0.2$, we obtained the evolution operator $\evol_t$. We also calculated an analytic expression for its derivative $\nabla\evol_t$, which is necessary for Algorithm \ref{alg:eKVLfilter}.

To complete our state-space model \eqref{eq:obs}-\eqref{eq:evol}, we assumed $\bQ_{t;i,j} = 0.2 \exp(-\|\bs_i - \bs_j\|/0.15)$, we randomly selected $n_t = n/10 = 96$ observation indices at each $t$, and we took the initial moments $\bfmu_{0|0}$ and $\bfSigma_{0|0}$ to be the corresponding sample moments from a long simulation from \eqref{eq:Lorenz04-1}. We simulated 40 datasets from the state-space model, each at $T=20$ time steps, and for each of the four exponential-family likelihoods, using $\tau^2=0.2$ in the Gaussian case.

Section S4 contains sample trajectories from this Lorenz model and shows that the trajectories can be successfully recovered using the EKVL filter (Algorithm \ref{alg:eKVLfilter}), even based on non-Gaussian data.

For each simulated data set, we applied the three filtering methods described in Section \ref{sec:simul-general}. We used $\condsetN=39$, and for the EKVL filter we set $J=2$, $M=7$, and $\lvert \sx_\jm \rvert$ equal to 5, 5, 5, 5, 6, 6, 6, respectively, for $m=0, 1, \dots, M-1$, and $\lvert \sx_\jM\rvert \leq 1$. The average scores over the 40 simulations are shown in Figure \ref{fig:lorenz-scores}.
Our method (HV) compared favorably to the low-rank filter and provided excellent approximation accuracy as evidenced by very low RRMSPE and dLS scores.

\begin{figure}[ht]
    \centering
    \begin{subfigure}{1.0\textwidth}
        \includegraphics[trim=0mm 6mm 0mm 0mm, clip,width=1.0\textwidth]{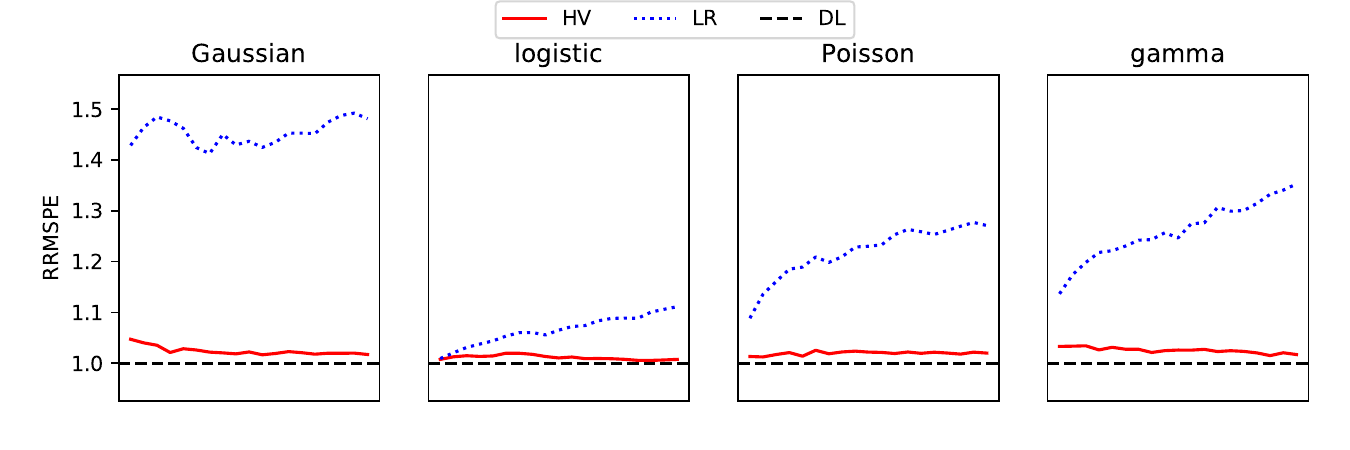}
    \end{subfigure}
    \begin{subfigure}{1.0\textwidth}
        \includegraphics[trim=0mm 0mm 0mm 12mm, clip,width=1.0\textwidth]{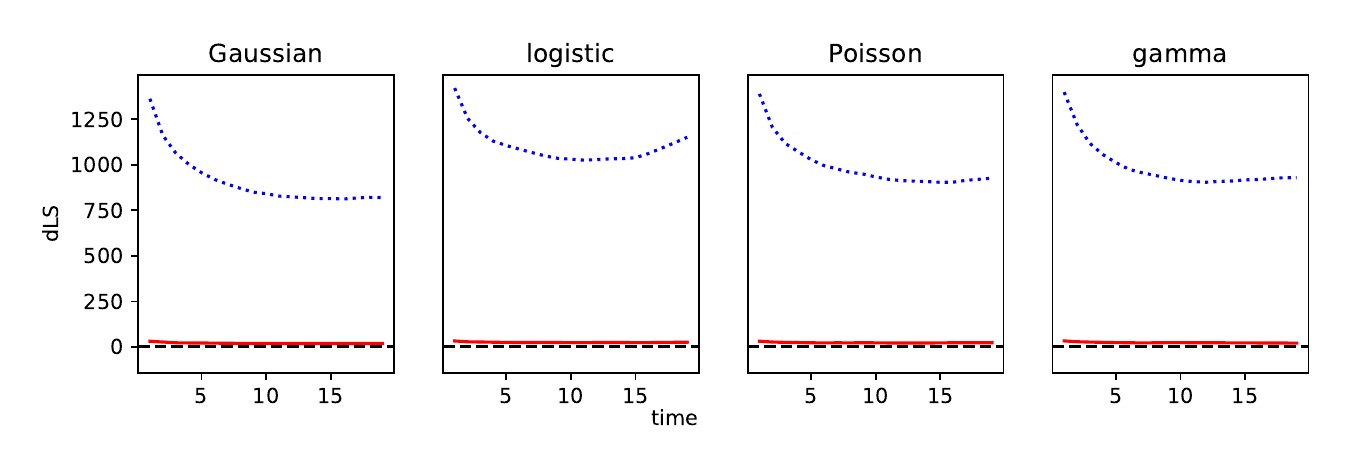}
    \end{subfigure}
    \caption{Accuracy of filtering distribution $\bx_t | \by_{1:t}$ for the Lorenz model in Section \ref{sec:lorenz}}
    \label{fig:lorenz-scores}
\end{figure}

\section{Filtering analysis of satellite data\label{sec:data-application}}

We also applied the filtering method described in Algorithm \ref{alg:eKVLfilter} to measurements of total precipitable water vapor (TPW) acquired by the Microwave Integrated Retrieval System (MIRS) instrument mounted on NASA's Geostationary Operational Environmental Satellite (GOES). This dataset was previously analyzed in \citet{katzfuss2017parallel}. We used 11 consecutive sets of observations collected over a period of 40 hours in January 2011 on a grid of size $263 \times 263 = 69{,}169$ over the continental United States, with each cell covering a square of size $16 \times 16$km. In order to avoid specifying boundary conditions for the diffusion model below, we restricted the area of interest to a smaller grid of size $n = 247 \times 247 = 61{,}009$. Observations at selected time points are shown in Figure \ref{fig:TPW-filtering}.

\begin{figure}[ht]
    \centering
        \includegraphics[width=.7\textwidth]{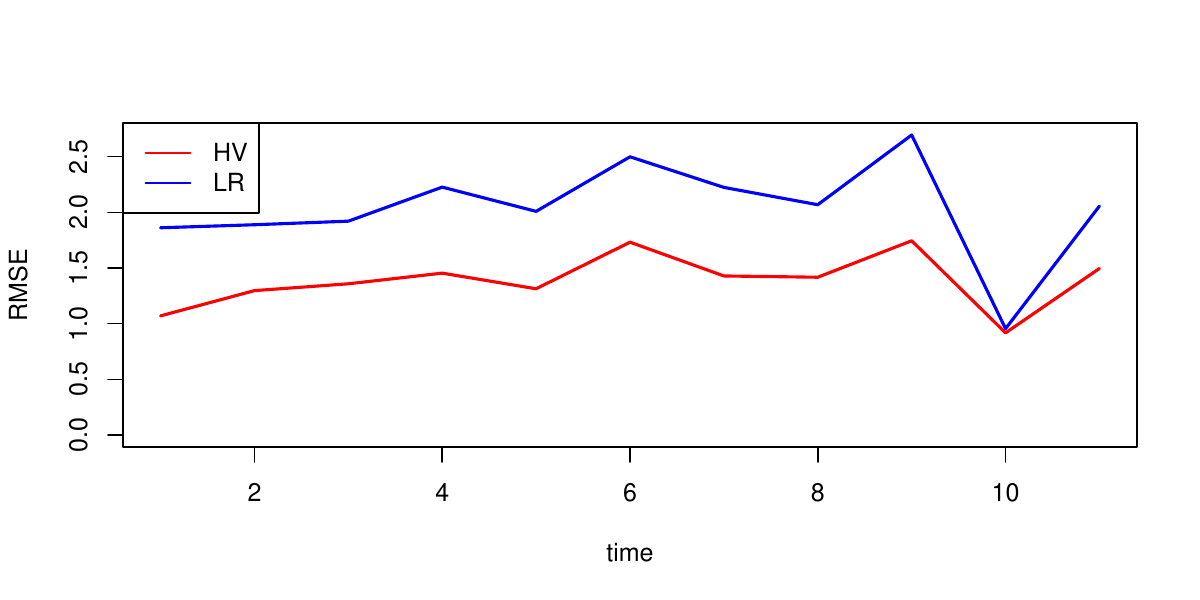}
    \caption{Root mean square prediction error (RMSPE) for the HV (red) and LR (blue) filters versus the 11 time points in the satellite-data application}
    \label{fig:data-application-scores}
\end{figure}

\begin{figure}[ht]
    \begin{subfigure}{0.325\textwidth}
        \includegraphics[width=1.0\textwidth]{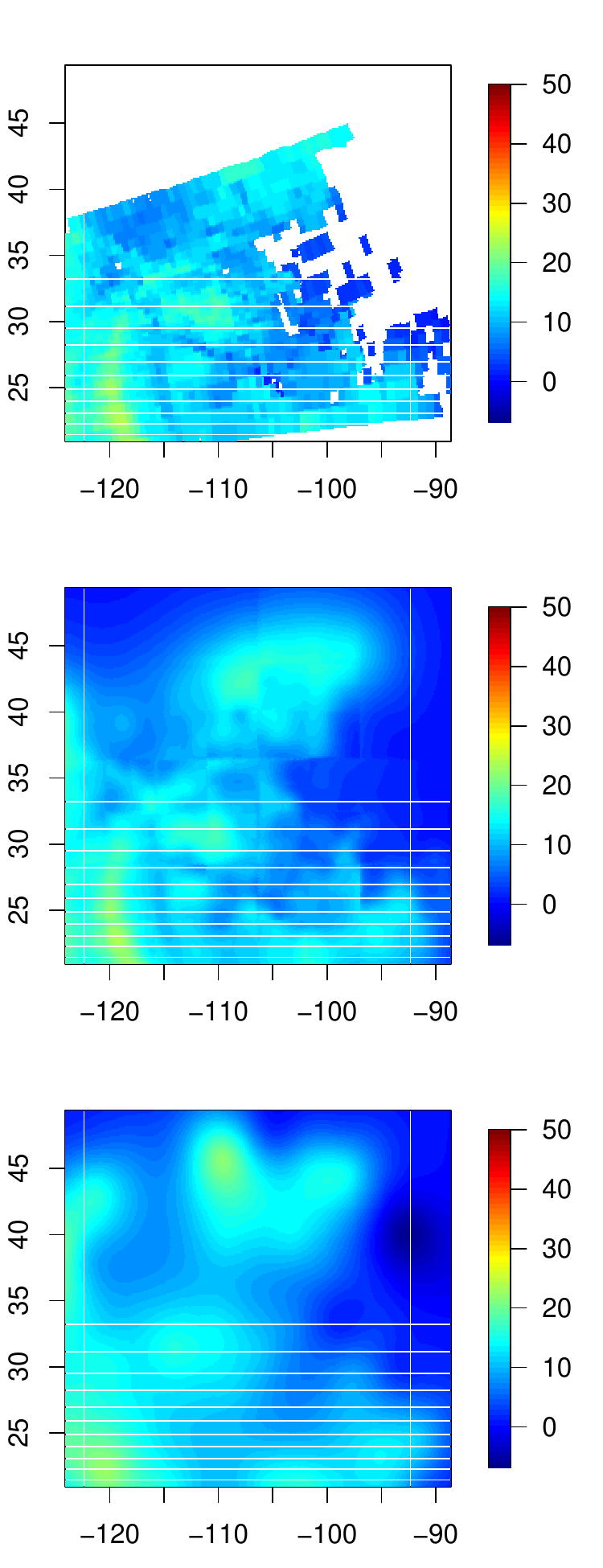}
        \caption*{t = 1}
    \end{subfigure}
    \begin{subfigure}{0.325\textwidth}
        \includegraphics[width=1.0\textwidth]{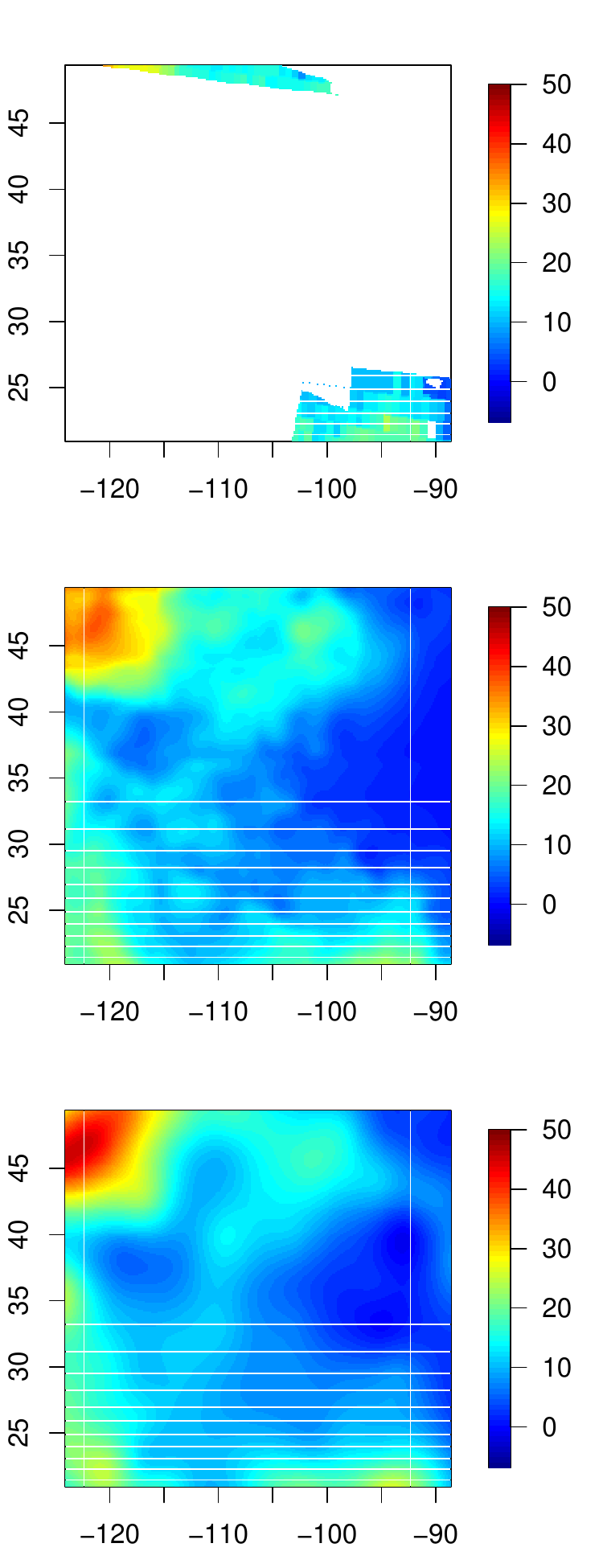}
        \caption*{t = 6}
    \end{subfigure}
    \begin{subfigure}{0.325\textwidth}
        \includegraphics[width=1.0\textwidth]{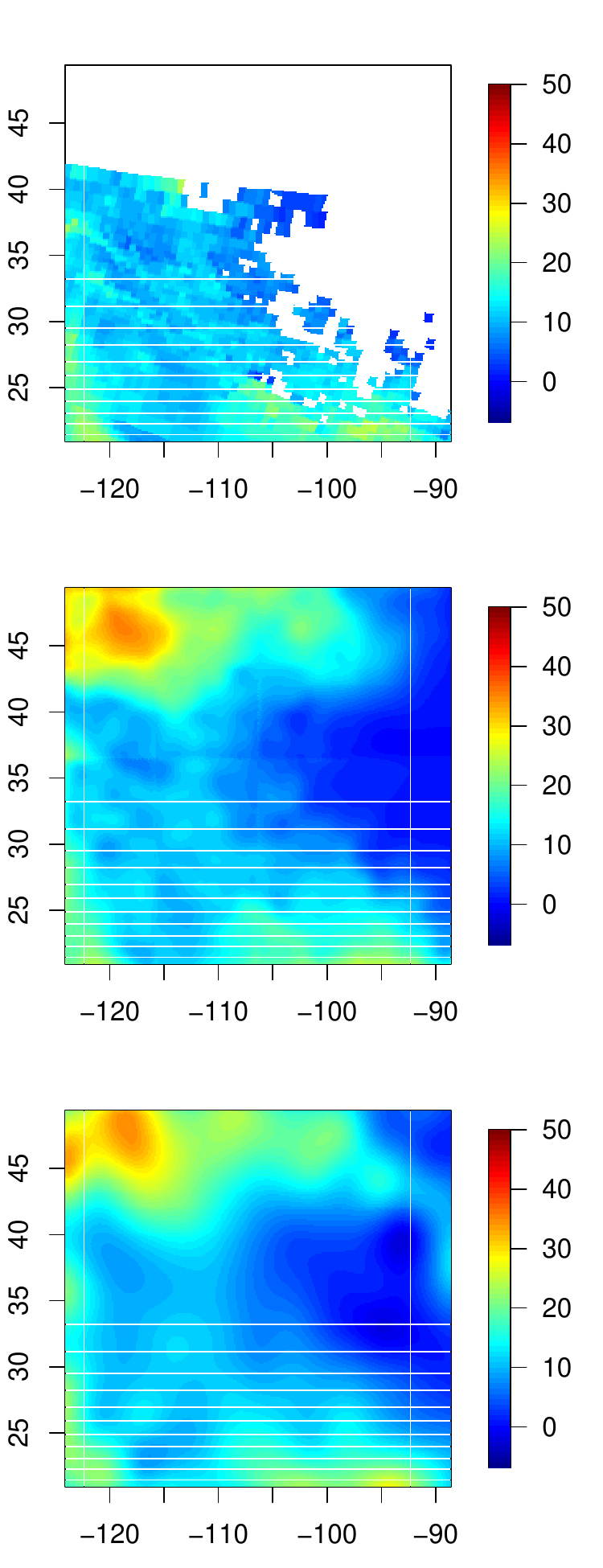}
        \caption*{t = 11}
    \end{subfigure}
    \caption{Total precipitable water measurements (first row), and complete filtering maps made using an HV filter and a LR filter (second and third rows, respectively). The horizontal white stripes are an artifact of the plotting procedure and are not caused by the filtering methods.}
    \label{fig:TPW-filtering}
\end{figure}

We assumed that over the period in which data were collected the dynamics of the TPW can be reasonably approximated by a diffusion process $x$. More formally, and similar to Section \ref{sec:simul-linear}, we assumed that
\begin{equation}
\frac{\partial x}{\partial t} = \alpha \left( \frac{ \partial^2 x }{ \partial^2 s_x } + \frac{ \partial^2 x }{ \partial^2 s_y } \right) + \nu(\bs, t), \label{eq:satdiff}
\end{equation}
where $\nu(\bs, t)$ is a zero-mean stationary Gaussian process with Mat\'ern covariance function with smoothness 1.5, range $\lambda$ and marginal variance $\sigma^2$, independent across time. Exploratory analysis showed that TPW changed slowly between consecutive time points, and so we set $\alpha = 0.99$. We discretized \eqref{eq:satdiff} using centered finite differences with 4 internal steps, which resulted in a model of the form \eqref{eq:linear-dynamics}. Within this context $\bx_t$ represents the values of the process in the middle of each grid cell and $\bfSigma_{0|0} = \frac{1}{1 - \alpha^2}\bQ_1 = \frac{1}{1 - \alpha^2}\bQ_2 = \dots = \frac{1}{1 - \alpha^2}\bQ_{11}$, with $\bfSigma_{0|0}$ obtained by evaluating the covariance function of $\nu$ over the centers of grid cells.

Following \citet{katzfuss2017parallel}, we assumed independent and additive measurement error for each time $t$ and grid cell $i$, such that $y_{ti} \sim \normal(x_{ti}, \tau^2)$ with $\tau = 4.5$. At each time point, we held out randomly selected 10\% of all observations, which we later used to assess the performance of our method.
Covariance-function parameters were chosen using a procedure described in Section S6, resulting in the values $\lambda=2.09$ and $\sigma^2=0.3376$.
We then used Algorithm \ref{alg:eKVLfilter} to obtain the filtering distribution of $\bx_t$ at all time points. 
We applied two approximation methods: the HV filter and the LR filter, which were described in Section \ref{sec:simul-general}. We used the same size of the conditioning set, $N=78$, for both methods, which ensures comparable computation time; for HV, we set $J=4$, $M=7$, and $|\mathcal{X}_\jm|$ equal to 25, 15, 15, 5, 5, 5, 3, 3, respectively, for $m = 0, 1, \dots, M$. The computational cost of the dense Laplace (DL) method is prohibitive for a data set as big as ours.

We compared the performance of the two filters by calculating the root mean squared prediction error (RMSPE) between the filtering means in $\bfmu_{t|t}$ obtained with either method and the true observed values, both only at the held-out test locations. Figure \ref{fig:data-application-scores} shows the values of RMSPE at all time points. We conclude that the HV filter significantly outperformed the LR filter due to its ability to capture much more fine scale detail. Plots of data and filtering results at selected time points are shown in Figure \ref{fig:TPW-filtering} and are representative of full results. We show the plots of the point-wise standard errors in Section S6 in the Supplement.


\section{Conclusions\label{sec:conclusion}}

We specified the relationship between ordered conditional independence and sparse (inverse) Cholesky factors. Next, we described the HV approximation and showed that it exhibits equivalent sparsity in the Cholesky factors of both its precision and covariance matrices. Due to this remarkable property, the approximation is suitable for high-dimensional spatio-temporal filtering.
The HV approximation can be computed using a simple and fast incomplete Cholesky decomposition. Further, by combining the approach with a Laplace approximation and the extended Kalman filter, we obtained scalable filters for non-Gaussian and non-linear spatio-temporal state-space models.

Our methods can also be directly applied to spatio-temporal point patterns modeled using log-Gaussian Cox processes, which can be viewed as Poisson data after discretization of the spatial domain, resulting in accurate Vecchia-Laplace-type approximations \citep{Zilber2019}.
We plan on investigating an extension of our methods to retrospective smoothing over a fixed time period. Another interesting extension would be to combine our methodology with the unscented Kalman filter \citep[][]{Julier1997} for strongly nonlinear evolution.
Finally, while we focused our attention on spatio-temporal data, our work can be extended to other applications, as long as a sensible hierarchical partitioning of the state vector can be obtained as in Section \ref{sec:hv-intro}.

Code implementing our methods and reproducing our numerical experiments is available at \url{https://github.com/katzfuss-group/vecchiaFilter}.

\footnotesize
\appendix
\section*{Acknowledgments}

The authors were partially supported by National Science Foundation (NSF) Grant DMS--1654083. MK's research was also partially supported by NSF Grants DMS--1953005 and CCF--1934904, and by the National Aeronautics and Space Administration (80NM0018F0527). We would like to thank Yang Ni, Florian Sch\"afer, and Mohsen Pourahmadi for helpful comments and discussions. We are also grateful to Edward Ott and Seung-Jong Baek for allowing us to use their implementation of the Lorenz model, and to Phil Taffet for advice regarding tailoring this implementation to our needs.

\FloatBarrier


\section{Glossary of graph theory terms\label{app:graph}}

We briefly review here some graph terminology necessary for our exposition and proofs, following \citet{lauritzen1996graphical}.

If $a \rightarrow b$ then we say that $a$ is a \emph{parent} of $b$ and, conversely, that $b$ is a \emph{child} of $a$. Moreover, if there is a sequence of distinct vertices $h_1, \dots, h_k$ such that $h_i \rightarrow h_{i+1}$ or $h_i \leftarrow h_{i+1}$ for all $i<k$, then we say that $h_1, \dots, h_k$ is a \emph{path}. If all arrows point to the right, we say that $h_k$ is a \emph{descendant} of $h_i$ for $i<k$, while each $h_i$ is an \emph{ancestor} of $h_k$.

A \emph{moral graph} is an undirected graph obtained from a DAG by first finding the pairs of parents of a common child that are not connected, adding an edge between them, and then by removing the directionality of all edges. If no edges need to be added to a DAG to make it moral, we call it a \emph{perfect graph}.

Let $G=(V,E)$ be a directed graph with vertices $V$ and edges $E$. If $V_1$ is a subset of $V$, then the \emph{ancestral set} of $V_1$, denoted $\An(V_1)$, is the smallest subset of $V$ that contains $V_1$ and such that for each $v \in \An(V_1)$ all ancestors of $v$ are also in $\An(V_1)$. 

Finally, consider three disjoint sets of vertices $A, B, C$ in an undirected graph. We say that $C$ \emph{separates} $A$ and $B$ if, for every pair of vertices $a \in A$ and $b \in B$, every path connecting $a$ and $b$ passes through $C$.

\section{Proofs \label{app:proofs}}

\begin{proof}[Proof of Remark \ref{remark:cond-indep}]
~~ \\ \vspace{-5mm}
    \begin{enumerate}
        \item This proof is based on \citet[][Sect.~3.2]{Schafer2017}. Split $\bw = (w_1,\ldots,w_n)^\top$ into two vectors, $\bu=\bw_{1:j-1}$ and $\bv=\bw_{j:n}$. Then,
\begin{eqnarray*}
\bL & = & \bK^{\frac{1}{2}} = \text{chol}\left( \begin{array}{cc} \bK_{uu} & \bK_{uv} \\ \bK_{vu} & \bK_{vv} \end{array}\right) \\
 & = & \text{chol}\left( \left(\begin{array}{cc} \bI & \bfzero \\ \bK_{vu}\bK_{uu}^{-1} & \bI \end{array}\right) \left(\begin{array}{cc} \bK_{uu} & \bfzero \\ \bfzero & \bK_{vv} - \bK_{vu}\bK_{uu}^{-1}\bK_{uv} \end{array}\right) \left(\begin{array}{cc} \bI & \bK_{uu}^{-1}\bK_{uv} \\ \bfzero & \bI \end{array}\right)  \right) \\
& = & \left(\begin{array}{cc} \bK_{uu}^\frac{1}{2} & \bfzero \\ 
\bK_{vu}\bK_{uu}^{-\frac{1}{2}} & \left( \bK_{vv} - \bK_{vu}\bK^{-1}_{uu} \bK_{vu}\right)^\frac{1}{2}\end{array}\right). \label{eq:blockchol}
\end{eqnarray*}
Note that $\bL_{i,j}$ is the $(i-j+1)$-th element in the first column of $\left( \bK_{vv} - \bK_{vu}\bK^{-1}_{uu} \bK_{vu}\right)^\frac{1}{2}$, which is the Cholesky factor of $\var(\bv|\bu) = \bK_{vv} - \bK_{vu}\bK^{-1}_{uu} \bK_{vu}$. Careful examination of the Cholesky factorization of a generic matrix $\bA$, which is described in Algorithm \ref{alg:ic0} when setting $s_{i,j}=0$ for all $(i,j)$, shows that the computations applied to the first column of $\bA$ are fairly simple. In particular, this implies that 
\[
\bL_{i,j} = \big(\chol(\var(\bv|\bu))\big)_{i-j+1,1} =  \frac{\cov(w_i,w_j|\bw_{1:j-1})}{\sqrt{\var(w_j|\bw_{1:j-1})}},
\]
because $w_j = \bv_1$ and $w_i = \bv_{i-j+1}$. Thus, $\bL_{i,j}=0 \iff \cov(w_i,w_j|\bw_{1:j-1})=0 \iff w_i \perp w_j \, | \, \bw_{1:j-1}$ because $\bw$ was assumed to be jointly normal.
    \item Thm.~12.5 in \citet{Rue2010} implies that for a Cholesky factor $\breve{\bU}$ of a precision matrix $\bP\bK^{-1}\bP$ of a normal random vector $\breve{\bw} = \bP\bw$, we have $\breve{\bU}_{i,j}=0 \iff \breve{w}_i \perp \breve{w}_j \,|\, \{\breve{\bw}_{j+1:i-1}, \breve{\bw}_{i+1:n}\}$. Equivalently, because $\bU = \bP\breve{\bU}\bP$, we conclude that $\bU_{j,i}=0 \iff w_i \perp w_j \,|\, \{\bw_{1:j-1}, \bw_{j+1:i}\}$.
    \end{enumerate}
\end{proof}


\begin{proof}[Proof of Proposition \ref{prop:same-sparsity-factors}]

The fact that $\hat p(\bx)$ is jointly normal holds for any Vecchia approximation \citep[e.g.,][Prop.~1]{Datta2016,Katzfuss2017a}.
By straightforward extension of the proof of \citet[][Prop.~1]{Katzfuss2017a} to the case with nonzero mean $\bfmu$, it can be shown that the Vecchia approximation does not modify the mean (only the covariance matrix). 
Hence, we can write $\hat p(\bx) = \normal_n(\bx|\bfmu,\hat\bfSigma)$.
\begin{enumerate}
\item First, note that $\bL$ and $\bU$ are lower- and upper-triangular matrices, respectively. Hence, we assume in the following that $j < i$ but $x_j \not \in \condset_i$, and then show the appropriate conditional-independence results. 
    \begin{enumerate}
        \item
By Remark \ref{remark:cond-indep}, we only need to show that $x_i \perp x_j | \bx_{1:j-1}$. Let $G$ be the graph corresponding to factorization \eqref{eq:vecchia-approx} and denote by $G_{\An(A)}^m$ the moral graph of the ancestral set of $A$. By Corollary 3.23 in \citet{lauritzen1996graphical}, it is enough to show that $\left\{x_1, \dots, x_{j-1}\right\}$ separates $x_i$ and $x_j$ in $G_{An\left(\left\{x_1, \dots, x_j, x_i\right\}\right)}^m$. In the rest of the proof we label each vertex by its index, to simplify notation.

We make three observations which can be easily verified. [1] $\An\left(\left\{1, \dots, j, i\right\}\right) \subset \left\{1, \dots, i\right\}$; [2] Given the ordering of variables described in Section \ref{sec:hv-intro}, if $k \rightarrow l$ then $k<l$; [3] $G$ is a perfect graph, so $G_{An\left(\left\{1, \dots, j, i\right\}\right)}^m$ is a subgraph of $G$ after all edges are turned into undirected ones.

We now prove Proposition \ref{prop:same-sparsity-factors}.1.a by contradiction. Assume that $\left\{x_1, \dots, x_{j-1}\right\}$ does not separate $x_i$ and $x_j$, which means that there exists a path $(h_1, \dots, h_k)$ in $\left\{x_1, \dots, x_i\right\}$ connecting $x_i$ and $x_j$ such that $h_k \in \An(\left\{1, \dots, j, i\right\})$ and $j+1 \leq h_k \leq i-1$.

There are four cases we need to consider and we show that each one of them leads to a contradiction. First, assume that the last edge in the path is $h_k \rightarrow j$. This violates observation [2]. Second, assume that the first edge is $i \leftarrow h_1$. But because of [1] we know that $h_1<i$, and by [2] we get a contradiction again. Third, let the path be of the form $i \rightarrow h_1 \leftarrow \dots \leftarrow h_k \leftarrow j$ (i.e., all edges are of the form $h_r \leftarrow h_{r+1}$). However, this would mean that $\sx_\jl \subset \sa_\im$, for $x_i\in \sx_\im$ and $x_j\in \sx_\jl$. This implies that $j \in \condset_i$, which in turn contradicts the assumption of the proposition. Finally, the only possibility we have not excluded yet is a path such that $i\leftarrow h_1 \dots h_k \leftarrow j$ with some edges of the form $h_r \rightarrow h_{r+1}$. Consider the largest $r$ for which this is true. Then by [3] there has to exist an edge $h_r \leftarrow h_p$ where $h_p \in \{h_{r+2}, \dots, h_k, j\}$. But this means that $j$ is an ancestor of $h_r$ so the path can be reduced to $i \rightarrow h_1, \dots h_r \rightarrow j$. We continue in this way for each edge "$\leftarrow$" which reduces this path to case 3 and leads to a contradiction. 

Thus we showed that all paths in $G_{An\left(\left\{1, \dots, j, i\right\}\right)}^m$ connecting $i$ and $j$ necessarily have be contained in $\left\{1, \dots, j-1\right\}$, which proves Proposition \ref{prop:same-sparsity-factors}.1.a.

    \item 
Like in part (a), we note that by Remark \ref{remark:cond-indep} it is enough to show that $x_i \perp x_j | \bx_{1:j-1, j+1:i-1}$. Therefore, proceeding in a way similar to the previous case, we need to show that $\{1, \dots, j-1, j+1, \dots, i\}$ separates $i$ and $j$ in $G_{An\left(\left\{1, \dots, i\right\}\right)}^m$. 
    However, notice that it can be easily verified that $\An(\{1, \dots, i\}) \subset \{1, \dots, i\})$, which means that $\An(\{1, \dots, i\}) = \{1, \dots, i\})$. Moreover, observe that the subgraph of $G$ generated by $\An(\{1, \dots, i\})$ is already moral, which means that if two vertices did not have a connecting edge in the original DAG, they also do not share an edge in $G^m_{1, \dots, i}$. Thus $i$ and $j$ are separated by $\{1, \dots, j-1, j+1, \dots, i-1\}$ in $G^m_{1, \dots, i}$, which by Corollary 3.23 in \citet{lauritzen1996graphical} proves part (b).

\end{enumerate}

\item 
Let $\bP$ be the reverse-ordering permutation matrix.
Let $\bB = \chol(\bP\hat{\bfSigma}^{-1}\bP)$. Then $\bU = \bP\bB\bP$. By the definition of $\bB$, we know that $\bB\bB^\top = \bP\hat{\bfSigma}^{-1}\bP$, and consequently $\bP\bB\bB^\top\bP = \hat{\bfSigma}^{-1}$. Therefore, $\hat{\bfSigma} = \left(\bP\bB\bB'\bP\right)^{-1}$. However, we have $\bP\bP = \bI$ and $\bP = \bP^\top$, and hence $
\hat{\bfSigma} = \left((\bP\bB\bP)(\bP\bB^\top\bP)\right)^{-1}$. So we conclude that $\hat{\bfSigma} = (\bU\bU^\top)^{-1} = (\bU^\top)^{-1}\bU^{-1} = (\bU^{-1})^\top\bU^{-1}$ and $(\bU^{-1})^\top = \bL$, or alternatively $\bL^{-\top} = \bU$.
\end{enumerate}
\end{proof}


\begin{proof}[Proof of Proposition \ref{prop:posterior}]
~~ \\ \vspace{-5mm}
\begin{enumerate}
\item 
We observe that HV satisfies the sparse general Vecchia requirement specified in \citet[][Sect.~4]{Katzfuss2017a}, because the nested ancestor sets imply that $\condset_j \subset \condset_i$ for all $j<i$ with $i,j \in \condset_k$.
Hence, reasoning presented in \citet[][proof of Prop.~6]{Katzfuss2017a} allows us to conclude that $\widetilde{\bU}_{j,i}= 0$ if $\bU_{j,i}= 0$.

\item 


As the observations in $\by$ are conditionally independent given $\bx$, we have
\begin{equation}
\textstyle \hat{p}(\bx|\by) \propto \hat{p}(\bx) p(\by|\bx) = \big( \prod_{i=1}^n p(x_i|\condset_i) \big) \big( \prod_{i \in \obs} p(y_i|x_i) \big),
\label{eq:posterior-factorization}
\end{equation}

Let $G$ be the graph representing factorization \eqref{eq:vecchia-approx}, and let $\tilde{G}$ be the DAG corresponding to \eqref{eq:posterior-factorization}. We order vertices in $\tilde{G}$ such that vertices corresponding to $\by$ have numbers $n+1, n+2, \dots, n+|\mathcal{I}|$. For easier notation we also define $\tilde{\obs} = \{i+n : i\in \obs\}$.
Similar to the proof of Proposition \ref{prop:same-sparsity-factors}.1, we suppress the names of variables and use the numbers of vertices instead (i.e., we refer to the vertex $x_k$ as $k$ and $y_j$ as $n+j$). Using this notation, and following the proof of Proposition 1, it is enough to show that $\{1, \dots, j-1\} \cup \tilde{\obs}$ separate $i$ and $j$ in $\tilde{\mathcal{G}}^m$, where  $\tilde{\mathcal{G}} \colonequals G_{\An(\{1, \dots, j, i\} \cup \tilde{\obs})}$.

We first show that $1, \dots, j-1$ separate $i$ and $j$ in $G$. Assume the opposite, that there exists a path $(h_1, \dots, h_k)$ in $\{j+1, \dots, i-1, i+1, \dots, n\}$ connecting $x_i$ and $x_j$. Let us start with two observations. First, note that the last arrow has to go toward $j$ (i.e., $h_k \rightarrow j$), because $h_k>j$. Second, let $p_0 = \max\{p < k: h_p \rightarrow h_{p+1}\}$, the index of the last vertex with an arrow pointing toward $j$ that is not $h_k$. If $p_0$ exists, then $(h_1, \dots, h_{p_0})$ is also a path connecting $i$ and $j$. This is because $h_{p_0}$ and $j$ are parents of $h_{p_0+1}$, and so $h_{p_0} \rightarrow j$, because G is perfect and $h_p > j$.

Now notice that a path $(h_1)$ (i.e., one consisting of a single vertex) cannot exist, because we would either have $i \rightarrow h_1 \rightarrow j$ or $i \leftarrow h_1 \leftarrow j$. The first case implies that $i \rightarrow j$, because in $G$ a node is a direct parent of all its descendants. Similarly in the second case, because $G$ is perfect and $j<i$, we also have that $i \leftarrow j$. In either case the assumption $j \not\in \condset_i$ is violated.

Now consider the general case of a path $(h_1, \dots, h_k)$ and recall that by observation 1 also  $h_k \leftarrow j$. But then the path $(h_1)$ also exists because by \ref{eq:vecchia-approx} all descendants are also direct children of their ancestors. As shown in the previous paragraph, we thus have a contradiction.

Finally, consider the remaining case such that $p_0 = \max\{p:h_p \rightarrow h_{p+1}\}$ exists. But then $(h_1, \dots, h_{p_0})$ is also a path connecting $i$ and $j$. 
If all arrows in this reduced paths are to the left, we already showed it produces a contradiction. If not, we can again find $\max\{p: h_p \rightarrow h_{p+1}\}$ and continue the reduction until all arrows are in the same direction, which leads to a contradiction again. 

Thus we show that $i$ and $j$ are separated by $\{1, \dots, j-1\}$ in $G$. This implies that they are separated by this set in every subgraph of $G$ that contains vertices $\{1, \dots, j, i\}$ and in particular in $\mathcal{G} \colonequals G_{An(\{1, \dots, j-1\} \cup \{j\}\cup \{i\}\cup \tilde{\obs})}$. Recall that we showed in Proposition 1 that $G$ is perfect, which means that $\mathcal{G} = \mathcal{G}^m$.

Next, for a directed graph $\mathcal{F} = (V, E)$ define the operation of \emph{adding a child} as extending $\mathcal{F}$ to $\tilde{\mathcal{F}} = \left(V \cup \{w\}, E \cup \{v \rightarrow w\}\right)$ where $v \in V$. In other words, we add one vertex and one edge such that one of the old vertices is a parent and the new vertex is a child. Note that a perfect graph with an added child is still perfect. Moreover, because the new vertex is connected to only a single existing one, adding a child does not create any new connections between the old vertices. It follows that if $C$ separates $A$ and $B$ in $\mathcal{F}$, then $C \cup \{w\}$ does so in $\bar{\mathcal{F}}$ as well.

Finally, notice that $\tilde{\mathcal{G}}$, the graph we are ultimately interested in, can be obtained from $\mathcal{G}$ using a series of child additions. Because these operations preserve separation even after adding the child to the separating set, we conclude that $i$ and $j$ are separated by $\{1, \dots, j-1\} \cup \tilde{\obs}$ in $\tilde{\mathcal{G}}$. Moreover, because $\mathcal{G}$ was perfect and because graph perfection is preserved under child addition, we have that $\tilde{\mathcal{G}} = \tilde{\mathcal{G}}^m$. 

\end{enumerate}
\end{proof}


\begin{remark}
Assuming the joint distribution $\hat{p}(\bx)$ as in \eqref{eq:vecchia-approx-ind}, we have $\hat{p}(x_i, x_j)=p(x_i, x_j)$ if $x_j \in \condset_i$; that is, the marginal bivariate distribution of a pair of variables is exact if one of the variables is in the conditioning set of the other.
\label{remark:marginals}
\end{remark}
\begin{proof}
First, consider the case where $x_i, x_j \in \sx_\jm$. Then note that $\hat{p}(\sx_\jm) = \int \hat{p}(\bx) d\bx_{-\sx_\jm}$. Furthermore, notice that given the decomposition \eqref{eq:vecchia-approx} and combining appropriate terms we can write 
$$\hat{p}(\bx)=p\left(\sx_\jm|\sa_\jm\right)p\left(\sa_\jm\right)p\left(\bx_{-\left\{\sx_\jm\cup \sa_\jm\right\}}\right).$$
Using these two observations, we conclude that
$$ \textstyle \hat{p}(\sx_\jm) = \int \hat{p}(\bx) d\bx_{-\sx_\jm} = \int \prod_{k=0}^m p(\sx_\jm|\sa_\jm) d(\sx, \sx_{j_1}, \ldots, \sx_\jmm) = p(\sx_\jm),
$$
which proves that $\hat{p}(x_i, x_j)=p(x_i, x_j)$ if $x_i, x_j \in \sx_\jm$.

Now let $x_i \in \sx_\jm$ and $x_j \in \sx_\jl$ with $\ell < m$, because $x_j \in \condset_i$ implies that $j<i$.
Then,
\begin{align*}
\hat{p}(\sx_\jm, \sx_\jl) & \textstyle =\int \hat{p}(\bx) d\bx_{-\{\sx_\jm \cup \sx_\jl\}} = \\
   & \textstyle = \int \prod_{k=0}^M \prod_\jk p(\sx_\jk|\sa_\jk) d\bx_{-\{\sx_\jm \cup \sx_\jl\}} \\
   & \textstyle = \int \prod_{k=0}^m p(\sx_\jk|\sa_\jk) d(A_\jl\cup\sx_\jlp \cup \dots \cup \sx_\jmm).
\end{align*}
The second equality uses \eqref{eq:vecchia-approx}, the definition of $\hat{p}$; the last equation is obtained by integrating out $\sx^{0:M}\setminus \bigcup_{k=0}^m \sx_\jk$.
Note that $\sa_\jm = \sa_\jl \cup \bigcup_{k=\ell}^{m-1} \sx_\jk$. Therefore, by Bayes law, for any $k>\ell$:
\begin{align*}
    p(\sx_\jk|\sa_\jk) & \textstyle = p\left(\sx_\jk|\sa_\jl \cup (\sa_\jk\setminus \sa_\jl)\right) \\
   & \textstyle = \frac{p\left(\sa_\jl | \sx_\jk \cup (\sa_\jk\setminus \sa_\jl)\right)p(\sx_\jk|\sa_\jk\setminus\sa_\jl)}{p(\sa_\jl|\sa_\jk \setminus \sa_\jl)} \\ 
   & \textstyle =\frac{p(\sa_\jl|\sa_\jkp \setminus \sa_\jl)p(\sx_\jk|\sa_\jk\setminus \sa_\jl)}{p(\sa_\jl|\sa_\jk \setminus \sa_\jl)} = (*)
\end{align*}
The last equality holds because $\sx_\jm \cup \sa_\jm = \sa_\jmp$. As a consequence
\begin{align*}
    \textstyle \prod_{k=0}^m p(\sx_\jk|\sa_\jk) & \textstyle = \prod_{k=0}^m p(\sx_\jk|\sa_\jk\setminus \sa_\jl) p(\sa_\jl) \\
    & \textstyle = \prod_{k=0}^m p(\sx_\jk \rvert \bigcup_{s=k-1}^\ell \sx_\js) p(\sa_\jl)
\end{align*}
and
\begin{align*}
(*) & \textstyle = \int \prod_{k=0}^m p(\sx_\jm|\sa_\jm) p(\sa_\jl) d(\sa_\jl\cup\sx_\jlp \cup \dots, \cup \sx_\jmm)  \\
 & \textstyle = \int p(\sx_\jm, \sx_\jmm, \dots, \sx_\jl, \sa_\jl) d(\sa_\jl\cup\sx_\jlp \cup \dots, \cup \sx_\jmm) \\ 
 & = p(\sx_\jm, \sx_\jl)    
\end{align*}
This means that $\hat{p}(\sx_\jm, \sx_\jl) = p(\sx_\jm, \sx_\jl)$, or that the marginal distribution of $\sx_\jm$ and $\sx_\jl$ in \eqref{eq:vecchia-approx} is the same as in the true distribution $p$.
Because $p$ is Gaussian, it follows that
$
\hat{p}(x_i, x_j) = p(x_i, x_j).
$
 This ends the proof.

\end{proof}


\begin{proof}[Proof of Proposition \ref{prop:chol-lemma}]
We use $l^\text{inc}_{i,j}$, $l_{i,j}$, $\sigma_{i,j}$, $\hat{\sigma}_{i,j}$ to denote the $(i,j)$-th elements of $\bL^\text{inc} = \ichol(\bfSigma, \bS)$, $\bL = \chol(\hat\bfSigma)$, $\bfSigma$, $\hat{\bfSigma}$, respectively. It can be seen easily in Algorithm \ref{alg:ic0} that $\chol(\bfSigma)=\ichol(\bfSigma, \bS^{\bm{1}})$, where $\bS^{\bm{1}}_{i,j} = 1$ for $i\geq j$ and $0$ otherwise.

We prove that $l^\text{inc}_{i,j}=l_{i,j}$ by induction over the elements of the Cholesky factor, following the order in which they are computed. First, we observe that $l^\text{inc}_{1,1}=l_{1,1}$. Next, consider the computation of the $(i,j)$-th entry, assuming that we have $l^\text{inc}_{k,q}=l_{k,q}$ for all previously computed entries. According to Algorithm \ref{alg:ic0}, we have
\begin{equation}
\textstyle l_{i,j} = \frac{1}{l_{j,j}} \big( \hat{\sigma}_{i,j} - \sum_{k=1}^{j-1}l_{i,k}l_{j,k} \big), \quad\quad l^{inc}_{i,j} = \frac{s_{i,j}}{l_{j,j}} \big( \sigma_{i,j} - \sum_{k=1}^{j-1}l_{i,k}l_{j,k} \big).
\end{equation}
Now, if $s_{i,j}=1 \iff x_j \in \condset_i$, then Remark \ref{remark:marginals} tells us that $\sigma_{i,j} = \hat{\sigma}_{i,j}$, and hence $l_{i,j} = l^\text{inc}_{i,j}$. If $s_{i,j}=0 \iff x_j \not\in \condset_i$, then $l^\text{inc}_{i,j}=0$, and also $l_{i,j}=0$ by Proposition \ref{prop:same-sparsity-factors}.1(a). This completes the proof via induction.
\end{proof}

\begin{remark}
    Let $\bx$ have density $\hat{p}(\bx)$ as in \eqref{eq:vecchia-approx}, and let each conditioning set $\condset_i$ have size at most $N$. Then $\bfLambda$, the precision matrix of $\bx$, has $\mathcal{O}(n\condsetN)$ nonzero elements. Moreover, the columns of $\bU = \rchol(\bfLambda)$ and the rows of $\bL = \chol(\bfLambda^{-1})$ each have at most $\condsetN$ nonzero elements.
    \label{clm:nnz}
\end{remark}

\begin{proof}
    Because the precision matrix is symmetric, it is enough to show that there are only $\mathcal{O}(n\condsetN)$ nonzero elements $\bfLambda_{i,j}$ in the upper triangle (i.e., with $i<j$). Let $x_i \not\in \condset_j$. This means that $x_i \not\rightarrow x_j$ and because by \eqref{eq:vecchia-approx} all edges go from a lower index to a higher index, $x_i$ and $x_j$ are not connected. Moreover because $\sa_\jm \supset \sa_\jmm$ there is also no $x_k$ with $k>\max(i,j)$ such that $x_i \rightarrow x_k$ and $x_j \rightarrow x_k$. Thus, using Proposition 3.2 in \citet{Katzfuss2017a} we conclude that $\bfLambda_{i,j}=0$ for $x_i \not\in \condset_j$. This means that each row $i$ has at most $|\condset_i|\leq \condsetN$ nonzero elements, and the entire lower triangle has $\mathcal{O}(n\condsetN)$ nonzero values.

    Proposition \ref{prop:same-sparsity-factors} implies that the $i$-th column of $\bU$ has at most as many nonzero entries as there elements in the conditioning set $\condset_i$, which we assumed to be of size at most $\condsetN$. Similarly, the $i$-th row of $\bL$ has at most as many nonzero entries as the number of elements in the conditioning set $\condset_i$.
\end{proof}

\begin{remark}
    Let $\bA$ be an $n \times n$ lower triangular matrix with at most $\condsetN<n$ nonzero elements in each row at known locations. Letting $a_{ij}$ and $\tilde{a}_{ij}$ be the $(i,j)$-th element of $\bA$ and $\bA^{-1}$, respectively, assume that $\tilde{a}_{i,j}=0$ if $a_{i,j}=0$. Then, the cost of calculating $\bA^{-1}$ from $\bA$ is $\mathcal{O}(n\condsetN^2)$.
    \label{clm:triangular-sparse-inverse}
\end{remark}

\begin{proof}
    Notice that calculating $\tilde{\ba}_k$, the $k$th column of $\bA^{-1}$, is equivalent to solving a linear system of the form
    \begin{equation}
        \left[\begin{array}{ccccc}
            a_{11} & 0 & 0 & \dots & 0 \\
            a_{21} & a_{22} & 0 & \dots & 0 \\
            a_{31} & a_{32} & a_{33} & \dots & 0 \\
            \vdots & \vdots & \vdots & \ddots & \vdots \\
            a_{n1} & a_{n2} & a_{n3} & \dots & a_{nn}
        \end{array}\right]
        \left[\begin{array}{c} \tilde{a}_{1k} \\ \tilde{a}_{2k} \\ \tilde{a}_{3k} \\ \vdots \\ \tilde{a}_{nk} \end{array}\right]
        =
        e_k,
    \end{equation}
    where $e_{ik} = 1$ if $k=i$ and 0 otherwise. Using forward substitution, the $i$-th element of $\tilde{\ba}_k$ can be calculated as $\tilde{a}_{ik} = \frac{1}{a_{kk}}\left(e_{ik} - \sum_{j=1}^{i-1}a_{ij}\tilde{a}_{jk}\right)$. This requires $\mathcal{O}(\condsetN)$ time, because our assumptions imply that there are at most $N$ nonzero terms under the summation. Moreover, we also assumed that $\tilde{\ba}_k$ has at most $\condsetN$ nonzero elements at known locations, and so we only need to calculate those. Thus computing $\tilde{\ba}_k$ has $\mathcal{O}(\condsetN^2)$ time complexity. As there are $n$ columns to calculate, this ends the proof.
\end{proof}


\begin{proof}[Proof of Proposition \ref{prop:complexity-vf}]

    Starting with the ichol() procedure in Line 1, obtaining each nonzero element $l_{i,j}$ requires calculating the outer product of previously computed segments of rows $i$ and $j$. Because Remark \ref{clm:nnz} implies that each row of $\bL$ has at most $\condsetN$ nonzero elements, obtaining $l_{i,j}$ is $\mathcal{O}(\condsetN)$. Remark \ref{clm:nnz} also shows that for each $i$, there are at most $\condsetN$ nonzero elements $s_{i,j}=1$, which implies that each row of the incomplete Cholesky factor can be calculated in $\mathcal{O}(\condsetN^2)$. Finally, because the matrix to be decomposed has $n$ rows, the overall cost of the algorithm is $\mathcal{O}(n\condsetN^2)$.
    In Line 2, because $\bL^{-1} = \bU^\top$, Proposition \ref{prop:same-sparsity-factors} tells us exactly which elements of $\bL^{-1}$ need to be calculated (i.e., are non-zero), and that there are only $\condsetN$ of them (Remark \ref{clm:nnz}). Using Remark \ref{clm:triangular-sparse-inverse}, this means that computing $\bL^{-1}$ can be accomplished in $\mathcal{O}(n\condsetN^2)$ time. Analogous reasoning and Proposition 2 allow us to conclude that computing $\widetilde{\bL}$ in Line 5 has the same complexity.
    The cost of Line 3 is dominated by taking the outer product of $\bU$, because $\bH$ and $\bR$ are assumed to have only one non-zero element in each row. However, $\bU\bU^\top$ is by definition equal to the precision matrix of $\bx$ under \eqref{eq:vecchia-approx}. Therefore, by Remark \ref{clm:nnz} there are at most $\mathcal{O}(n\condsetN)$ elements to calculate and each requires multiplication of two rows with at most $N$ nonzero elements. This means that this step can be accomplished in $\mathcal{O}(n\condsetN^2)$ time.
    The most expensive operation in Line 4 is taking the Cholesky factor. However, its cost proportional to the square of the number of nonzero elements in each column \citep[e.g.,][Thm.~2.2]{Toledo2007}, which by Remark \ref{clm:nnz} we know to be $\condsetN$. As there are $n$ columns, this step requires $\mathcal{O}(n\condsetN^2)$ time.
    Finally, the most expensive operation in Line 6 is the multiplication of a vector by matrix $\widetilde\bL$. By Proposition \ref{prop:posterior}, $\widetilde{\bL}$ has the same number of nonzero elements per row as $\bL$, which is at most $\condsetN$ by Remark \ref{clm:nnz}. Thus, multiplication of $\widetilde{\bL}$ and any dense vector can be performed in $\mathcal{O}(n\condsetN)$ time.
    To conclude, each line of Algorithm \ref{alg:VF} can be computed in at most $\order(n\condsetN^2)$ time, and so the total time complexity of the algorithm is also $\mathcal{O}(n\condsetN^2)$.
    
    Regarding memory complexity, notice that by Remarks \ref{clm:nnz} and \ref{prop:same-sparsity-factors}, matrices $\bL$, $\bU$, $\widetilde{\bL}$, $\widetilde{\bU}$, and $\bfLambda$ have $\order(n\condsetN)$ nonzero elements, and \eqref{eq:gobs} implies that matrices $\bH$ and $\bR$ have at most $n$ entries. Further, the incomplete Cholesky decomposition in Line 1 requires only those elements of $\bfSigma$ that correspond to the nonzero elements of $\bS$. Because $\bS$ has at most $\condsetN$ non-zero elements in each row by construction, each of the matrices that are decomposed can be stored using $\mathcal{O}(n\condsetN)$ memory, and so the memory requirement for Algorithm \ref{alg:VF} is $\mathcal{O}(n\condsetN)$.
\end{proof}


\bibliographystyle{apalike}      
\bibliography{mendeley,additionalrefs}

\begin{thebibliography}{}

\bibitem[Ambikasaran et~al., 2016]{Ambikasaran2016}
Ambikasaran, S., Foreman-Mackey, D., Greengard, L., Hogg, D.~W., and O'Neil, M.
  (2016).
\newblock {Fast direct methods for Gaussian processes}.
\newblock {\em IEEE Transactions on Pattern Analysis and Machine Intelligence},
  38(2):252--265.

\bibitem[Anderson, 2001]{Anderson2001}
Anderson, J.~L. (2001).
\newblock {An ensemble adjustment Kalman filter for data assimilation}.
\newblock {\em Monthly Weather Review}, 129(12):2884--2903.

\bibitem[Banerjee et~al., 2004]{Banerjee2004}
Banerjee, S., Carlin, B.~P., and Gelfand, A.~E. (2004).
\newblock {\em {Hierarchical Modeling and Analysis for Spatial Data}}.
\newblock Chapman {\&} Hall.

\bibitem[Banerjee et~al., 2008]{Banerjee2008}
Banerjee, S., Gelfand, A.~E., Finley, A.~O., and Sang, H. (2008).
\newblock {Gaussian predictive process models for large spatial data sets}.
\newblock {\em Journal of the Royal Statistical Society, Series B},
  70(4):825--848.

\bibitem[Bonat and Ribeiro~Jr, 2016]{Bonat2016}
Bonat, W.~H. and Ribeiro~Jr, P.~J. (2016).
\newblock Practical likelihood analysis for spatial generalized linear mixed
  models.
\newblock {\em Environmetrics}, 27(2):83--89.

\bibitem[Burgers et~al., 1998]{Burgers1998}
Burgers, G., Jan~van Leeuwen, P., and Evensen, G. (1998).
\newblock {Analysis scheme in the ensemble Kalman filter}.
\newblock {\em Monthly Weather Review}, 126(6):1719--1724.

\bibitem[Cressie, 1993]{Cressie1993}
Cressie, N. (1993).
\newblock {\em {Statistics for Spatial Data, revised edition}}.
\newblock John Wiley {\&} Sons, New York, NY.

\bibitem[Datta et~al., 2016]{Datta2016}
Datta, A., Banerjee, S., Finley, A.~O., and Gelfand, A.~E. (2016).
\newblock {Hierarchical nearest-neighbor Gaussian process models for large
  geostatistical datasets}.
\newblock {\em Journal of the American Statistical Association},
  111(514):800--812.

\bibitem[Evensen, 1994]{Evensen1994}
Evensen, G. (1994).
\newblock {Sequential data assimilation with a nonlinear quasi-geostrophic
  model using Monte Carlo methods to forecast error statistics}.
\newblock {\em Journal of Geophysical Research}, 99(C5):10143–10162.

\bibitem[Evensen, 2007]{Evensen2007}
Evensen, G. (2007).
\newblock {\em {Data Assimilation: The Ensemble Kalman Filter}}.
\newblock Springer.

\bibitem[Finley et~al., 2009]{Finley2009}
Finley, A.~O., Sang, H., Banerjee, S., and Gelfand, A.~E. (2009).
\newblock {Improving the performance of predictive process modeling for large
  datasets}.
\newblock {\em Computational Statistics {\&} Data Analysis}, 53(8):2873--2884.

\bibitem[Geoga et~al., 2020]{Geoga2018}
Geoga, C.~J., Anitescu, M., and Stein, M.~L. (2020).
\newblock {Scalable Gaussian process computations using hierarchical matrices}.
\newblock {\em Journal of Computational and Graphical Statistics},
  29(2):227--237.

\bibitem[Gneiting and Katzfuss, 2014]{Gneiting2014}
Gneiting, T. and Katzfuss, M. (2014).
\newblock {Probabilistic forecasting}.
\newblock {\em Annual Review of Statistics and Its Application}, 1(1):125--151.

\bibitem[Golub and Van~Loan, 2012]{Golub2012}
Golub, G.~H. and Van~Loan, C.~F. (2012).
\newblock {\em {Matrix Computations}}.
\newblock JHU Press, 4th edition.

\bibitem[Grewal and Andrews, 1993]{Grewal1993}
Grewal, M.~S. and Andrews, A.~P. (1993).
\newblock {\em {Kalman Filtering: Theory and Applications}}.
\newblock Prentice Hall.

\bibitem[Guinness, 2018]{Guinness2016a}
Guinness, J. (2018).
\newblock {Permutation and grouping methods for sharpening Gaussian process
  approximations}.
\newblock {\em Technometrics}, 60(4):415--429.

\bibitem[Hackbusch, 2015]{Hackbusch2015}
Hackbusch, W. (2015).
\newblock {\em {Hierarchical Matrices: Algorithms and Analysis}}, volume~49.
\newblock Springer.

\bibitem[Heaton et~al., 2019]{Heaton2017}
Heaton, M.~J., Datta, A., Finley, A.~O., Furrer, R., Guinness, J., Guhaniyogi,
  R., Gerber, F., Gramacy, R.~B., Hammerling, D., Katzfuss, M., Lindgren, F.,
  Nychka, D.~W., Sun, F., and Zammit-Mangion, A. (2019).
\newblock {A case study competition among methods for analyzing large spatial
  data}.
\newblock {\em Journal of Agricultural, Biological, and Environmental
  Statistics}, 24(3):398--425.

\bibitem[Johannesson et~al., 2003]{Johannesson2003}
Johannesson, G., Cressie, N., and Huang, H.-C. (2003).
\newblock {Dynamic multi-resolution spatial models}.
\newblock In Higuchi, T., Iba, Y., and Ishiguro, M., editors, {\em Proceedings
  of AIC2003: Science of Modeling}, volume~14, pages 167--174, Tokyo. Institute
  of Statistical Mathematics.

\bibitem[Julier and Uhlmann, 1997]{Julier1997}
Julier, S.~J. and Uhlmann, J.~K. (1997).
\newblock {New extension of the Kalman filter to nonlinear systems}.
\newblock In {\em Int. Symp. Aerospace/Defense Sensing, Simul. and Controls},
  pages 182--193.

\bibitem[Jurek and Katzfuss, 2021]{Jurek2018}
Jurek, M. and Katzfuss, M. (2021).
\newblock {Multi-resolution filters for massive spatio-temporal data}.
\newblock {\em Journal of Computational and Graphical Statistics}.

\bibitem[Kalman, 1960]{Kalman1960}
Kalman, R. (1960).
\newblock {A new approach to linear filtering and prediction problems}.
\newblock {\em Journal of Basic Engineering}, 82(1):35--45.

\bibitem[Katzfuss, 2017]{Katzfuss2015}
Katzfuss, M. (2017).
\newblock {A multi-resolution approximation for massive spatial datasets}.
\newblock {\em Journal of the American Statistical Association},
  112(517):201--214.

\bibitem[Katzfuss and Cressie, 2011]{Katzfuss2010}
Katzfuss, M. and Cressie, N. (2011).
\newblock {Spatio-temporal smoothing and EM estimation for massive
  remote-sensing data sets}.
\newblock {\em Journal of Time Series Analysis}, 32(4):430--446.

\bibitem[Katzfuss and Gong, 2020]{Katzfuss2017b}
Katzfuss, M. and Gong, W. (2020).
\newblock {A class of multi-resolution approximations for large spatial
  datasets}.
\newblock {\em Statistica Sinica}, 30(4):2203--2226.

\bibitem[Katzfuss and Guinness, 2021]{Katzfuss2017a}
Katzfuss, M. and Guinness, J. (2021).
\newblock {A general framework for Vecchia approximations of Gaussian
  processes}.
\newblock {\em Statistical Science}, 36(1):124--141.

\bibitem[Katzfuss et~al., 2020a]{Katzfuss2018}
Katzfuss, M., Guinness, J., Gong, W., and Zilber, D. (2020a).
\newblock {Vecchia approximations of Gaussian-process predictions}.
\newblock {\em Journal of Agricultural, Biological, and Environmental
  Statistics}, 25(3):383--414.

\bibitem[Katzfuss et~al., 2020b]{Katzfuss2020}
Katzfuss, M., Guinness, J., and Lawrence, E. (2020b).
\newblock {Scaled Vecchia approximation for fast computer-model emulation}.
\newblock {\em arXiv:2005.00386}.

\bibitem[Katzfuss and Hammerling, 2017]{katzfuss2017parallel}
Katzfuss, M. and Hammerling, D. (2017).
\newblock Parallel inference for massive distributed spatial data using
  low-rank models.
\newblock {\em Statistics and Computing}, 27(2):363--375.

\bibitem[Katzfuss et~al., 2020c]{GPvecchia}
Katzfuss, M., Jurek, M., Zilber, D., and Gong, W. (2020c).
\newblock {\em GPvecchia: Scalable Gaussian-Process Approximations}.
\newblock R package version 0.1.3.

\bibitem[Katzfuss et~al., 2016]{Katzfuss2015b}
Katzfuss, M., Stroud, J.~R., and Wikle, C.~K. (2016).
\newblock {Understanding the ensemble Kalman filter}.
\newblock {\em The American Statistician}, 70(4):350--357.

\bibitem[Katzfuss et~al., 2020d]{Katzfuss2017c}
Katzfuss, M., Stroud, J.~R., and Wikle, C.~K. (2020d).
\newblock {Ensemble Kalman methods for high-dimensional hierarchical dynamic
  space-time models}.
\newblock {\em Journal of the American Statistical Association},
  115(530):866--885.

\bibitem[Kaufman et~al., 2011]{Kaufman2011}
Kaufman, C.~G., Bingham, D., Habib, S., Heitmann, K., and Frieman, J.~A.
  (2011).
\newblock {Efficient emulators of computer experiments using compactly
  supported correlation functions, with an application to cosmology}.
\newblock {\em The Annals of Applied Statistics}, 5(4):2470--2492.

\bibitem[Kennedy and O'Hagan, 2001]{Kennedy2001}
Kennedy, M.~C. and O'Hagan, A. (2001).
\newblock {Bayesian calibration of computer models}.
\newblock {\em Journal of the Royal Statistical Society: Series B},
  63(3):425--464.

\bibitem[Lauritzen, 1996]{lauritzen1996graphical}
Lauritzen, S. (1996).
\newblock {\em Graphical Models}.
\newblock Oxford Statistical Science Series. Clarendon Press.

\bibitem[Li et~al., 2014]{LiAmbikasaran2014}
Li, J.~Y., Ambikasaran, S., Darve, E.~F., and Kitanidis, P.~K. (2014).
\newblock {A Kalman filter powered by H-matrices for quasi-continuous data
  assimilation problems}.
\newblock {\em Water Resources Research}, 50(5):3734--3749.

\bibitem[Liu et~al., 2020]{Liu2018}
Liu, H., Ong, Y.-S., Shen, X., and Cai, J. (2020).
\newblock {When Gaussian process meets big data: A review of scalable GPs}.
\newblock {\em IEEE Transactions on Neural Networks and Learning Systems}.

\bibitem[Lorenz, 2005]{Lorenz2005}
Lorenz, E.~N. (2005).
\newblock {Designing chaotic models}.
\newblock {\em Journal of the Atmospheric Sciences}, 62(5):1574--1587.

\bibitem[Nickisch and Rasmussen, 2008]{Nickisch2008}
Nickisch, H. and Rasmussen, C.~E. (2008).
\newblock {Approximations for binary Gaussian process classification}.
\newblock {\em Journal of Machine Learning Research}, 9:2035--2078.

\bibitem[Nychka and Anderson, 2010]{Nychka2010}
Nychka, D.~W. and Anderson, J.~L. (2010).
\newblock {Data assimilation}.
\newblock In Gelfand, A.~E., Diggle, P.~J., Fuentes, M., and Guttorp, P.,
  editors, {\em Handbook of Spatial Statistics}, chapter~27, pages 477--494.
  CRC Press.

\bibitem[Ott et~al., 2004]{Ott2004}
Ott, E., Hunt, B.~R., Szunyogh, I., Zimin, A.~V., Kostelich, E.~J., Corazza,
  M., Kalnay, E., Patil, D.~J., and Yorke, J.~A. (2004).
\newblock {A local ensemble Kalman filter for atmospheric data assimilation}.
\newblock {\em Tellus A}, 56:415--428.

\bibitem[Pham et~al., 1998]{Pham1998}
Pham, D.~T., Verron, J., and Christine~Roubaud, M. (1998).
\newblock {A singular evolutive extended Kalman filter for data assimilation in
  oceanography}.
\newblock {\em Journal of Marine Systems}, 16(3-4):323--340.

\bibitem[Rasmussen and Williams, 2006]{Rasmussen2006}
Rasmussen, C.~E. and Williams, C. K.~I. (2006).
\newblock {\em {Gaussian Processes for Machine Learning}}.
\newblock MIT Press.

\bibitem[Rue and Held, 2010]{Rue2010}
Rue, H. and Held, L. (2010).
\newblock {Discrete spatial variation}.
\newblock In {\em Handbook of Spatial Statistics}, chapter~12, pages 171--200.
  CRC Press.

\bibitem[Sacks et~al., 1989]{Sacks1989}
Sacks, J., Welch, W., Mitchell, T., and Wynn, H. (1989).
\newblock {Design and analysis of computer experiments}.
\newblock {\em Statistical Science}, 4(4):409--435.

\bibitem[Saibaba et~al., 2015]{Saibaba2015}
Saibaba, A.~K., Miller, E.~L., and Kitanidis, P.~K. (2015).
\newblock Fast{ K}alman filter using hierarchical matrices and a low-rank
  perturbative approach.
\newblock {\em Inverse Problems}, 31(1):015009.

\bibitem[Sch{\"{a}}fer et~al., 2021a]{Schafer2020}
Sch{\"{a}}fer, F., Katzfuss, M., and Owhadi, H. (2021a).
\newblock {Sparse Cholesky factorization by Kullback-Leibler minimization}.
\newblock {\em SIAM Journal on Scientific Computing}, 43(3):A2019--A2046.

\bibitem[Sch{\"{a}}fer et~al., 2021b]{Schafer2017}
Sch{\"{a}}fer, F., Sullivan, T.~J., and Owhadi, H. (2021b).
\newblock {Compression, inversion, and approximate PCA of dense kernel matrices
  at near-linear computational complexity}.
\newblock {\em Multiscale Modeling {\&} Simulation}, 19(2):688--730.

\bibitem[Sigrist et~al., 2015]{Sigrist2015}
Sigrist, F., K{\"{u}}nsch, H.~R., and Stahel, W.~A. (2015).
\newblock {Stochastic partial differential equation based modelling of large
  space–time data sets}.
\newblock {\em Journal of the Royal Statistical Society, Series B}.

\bibitem[Stein, 2002]{Stein2002}
Stein, M.~L. (2002).
\newblock {The screening effect in kriging}.
\newblock {\em Annals of Statistics}, 30(1):298--323.

\bibitem[Stein, 2011]{Stein2011}
Stein, M.~L. (2011).
\newblock {When does the screening effect hold?}
\newblock {\em Annals of Statistics}, 39(6):2795--2819.

\bibitem[Stein et~al., 2004]{Stein2004}
Stein, M.~L., Chi, Z., and Welty, L. (2004).
\newblock {Approximating likelihoods for large spatial data sets}.
\newblock {\em Journal of the Royal Statistical Society: Series B},
  66(2):275--296.

\bibitem[Tierney and Kadane, 1986]{Tierney1986}
Tierney, L. and Kadane, J.~B. (1986).
\newblock {Accurate approximations for posterior moments and marginal
  densities}.
\newblock {\em Journal of the American Statistical Association},
  81(393):82--86.

\bibitem[Toledo, 2007]{Toledo2007}
Toledo, S. (2007).
\newblock {Lecture Notes on Combinatorial Preconditioners, Chapter 3}.
\newblock http://www.tau.ac.il/{\~{}}stoledo/Support/chapter-direct.pdf.

\bibitem[Vecchia, 1988]{Vecchia1988}
Vecchia, A. (1988).
\newblock {Estimation and model identification for continuous spatial
  processes}.
\newblock {\em Journal of the Royal Statistical Society, Series B},
  50(2):297--312.

\bibitem[Verlaan and Heemink, 1995]{Verlaan1995}
Verlaan, M. and Heemink, A.~W. (1995).
\newblock {Reduced rank square root filters for large scale data assimilation
  problems}.
\newblock In {\em Proceedings of the 2nd International Symposium on
  Assimilation in Meteorology and Oceanography}, pages 247--252. World
  Meteorological Organization.

\bibitem[Wikle and Cressie, 1999]{Wikle1999}
Wikle, C.~K. and Cressie, N. (1999).
\newblock {A dimension-reduced approach to space-time Kalman filtering}.
\newblock {\em Biometrika}, 86(4):815--829.

\bibitem[Zilber and Katzfuss, 2021]{Zilber2019}
Zilber, D. and Katzfuss, M. (2021).
\newblock {Vecchia-Laplace approximations of generalized Gaussian processes for
  big non-Gaussian spatial data}.
\newblock {\em Computational Statistics {\&} Data Analysis}, 153:107081.

\end{thebibliography}

\end{document}